\crefname{hypothesis}{Hypothesis}{Hypotheses}
\title{The sample complexity of sparse multi-reference alignment and single-particle cryo-electron microscopy}
\author{Tamir Bendory  and Dan Edidin}
\numberwithin{equation}{section}
\renewcommand\L{{\mathbb L}}
\newcommand\Lf{{\mathbb L}_f}
\newcommand\I{\iota}
 \newcommand\C{\mathbb{C}}
\newcommand\E{\mathbb{E}}
 \renewcommand\P{\mathbb{P}}
\newcommand\R{\mathbb{R}} \newcommand\Z{\mathbb{Z}}
\newcommand \K{\mathbb{K}}
\newcommand\NN{\mathcal{N}} 
\newcommand\V{\mathcal{V}}
\newcommand\hg{\widehat{g}} 
\newcommand\hf{\widehat{f}}
\DeclareMathOperator\trace{trace}
\DeclareMathOperator\Mat{Mat}
\DeclareMathOperator\SO{SO}
 \DeclareMathOperator\rank{rank}
\DeclareMathOperator\Gr{Gr} \DeclareMathOperator\GL{GL}
\DeclareMathOperator\Span{span}
 \DeclareMathOperator{\Hom}{Hom}
 \newtheorem{example}[theorem]{Example}
\newcommand{\rev}[1]{{\color{black}{#1}}}
\begin{document}

\maketitle

\begin{abstract}
	Multi-reference alignment (MRA) is the problem of  recovering  a signal from its multiple noisy copies, each acted upon by a random group element.
	MRA is mainly motivated by single-particle cryo-electron microscopy (cryo-EM) that has recently joined X-ray crystallography as one of the two leading technologies to reconstruct biological molecular structures. 
	Previous papers have shown that in the high noise regime, the  sample complexity of MRA and cryo-EM is $n=\omega(\sigma^{2d})$, where $n$ is the number of observations, $\sigma^2$ is the variance of the noise, and $d$ is the lowest-order moment of the observations that uniquely determines the signal. In particular, it was shown that in many cases, $d=3$ for generic signals, and thus the sample complexity is $n=\omega(\sigma^6)$.
	
	In this paper, we analyze the second moment of the MRA and cryo-EM models. 
	First, we show that in both models the second moment determines the signal up to a set of unitary matrices, whose dimension is governed by the
	decomposition of the space of signals into irreducible representations
	of the group.
	Second, we  derive sparsity conditions under which a signal can be recovered from the second moment, implying  sample complexity of $n=\omega(\sigma^4)$.
	Notably, we show that the sample complexity of cryo-EM is $n=\omega(\sigma^4)$ if 
	at most one third of the coefficients representing the molecular structure are non-zero; this bound is near-optimal. The analysis is based on tools from representation theory and algebraic geometry. 
	We also derive bounds on recovering a sparse signal from its power spectrum, which  is the main computational problem of X-ray crystallography.
\end{abstract}



\section{Introduction}
This paper studies  the multi-reference alignment (MRA) model of estimating a signal from its multiple noisy copies, each acted upon by a random group element. 
Let~$G$ be  a compact group acting on \rev{an $N$-dimensional vector space~$V$ that can be identified with $\R^N$.}
Each MRA observation $y$ is drawn from 
\begin{eqnarray} \label{eq:mra}
	y = g\cdot f + \varepsilon,
\end{eqnarray}
where  $g\in G$, $\varepsilon\sim\NN(0,\sigma^2 I)$ is a Gaussian noise
\rev{vector} independent of $g$,    $\cdot$ denotes the group action, and $f\in V$.  
We assume that the distribution over $G$ is uniform (Haar).
The goal is to estimate the signal $f\in V$ from $n$ realizations 
\begin{equation}
	y_i = g_i\cdot f + \varepsilon_i\quad i=1,\ldots,n.
\end{equation}
\rev{Evidently, given a set of observations $y_1,\ldots,y_n$ and  with no prior knowledge on $f$, it is impossible to distinguish between $f$ and $\tilde g \cdot f$ for any $\tilde g\in G$}. Thus, we can only hope to recover the orbit of $f\in V$ under $G$.

A wide range of MRA models have been studied in recent years.
The simplest and most studied model is when a signal in $V=\R^N$ is estimated from its multiple circularly shifted, noisy copies, namely $G=\Z_\rev{N}$~\cite{bandeira2014multireference,bendory2017bispectrum,abbe2018multireference,bandeira202optimal,perry2019sample}. 
Figure~\ref{fig:1Dmra} illustrates observations drawn from this model.
Additional MRA models include the dihedral group acting on $\R^N$~\cite{bendory2022dihedral}, 
the group of two-dimensional rotations SO(2) acting on band-limited images~\cite{bandeira2020non,ma2019heterogeneous,janco2022accelerated}, the group of three-dimensional rotations SO(3) acting on band-limited signals on the sphere~\cite{bandeira2017estimation,liu2021algorithms},  as well as additional setups~\cite{pumir2021generalized,hirn2021wavelet,bendory2022compactification}.  
The results of this paper hold for any MRA model when a compact group $G$ is acting on a finite-dimensional space $V$; specific examples are provided in Section~\ref{sec:ambiguities_examples}.

\begin{figure}[h]
	\centering
	\includegraphics[width=0.5\linewidth]{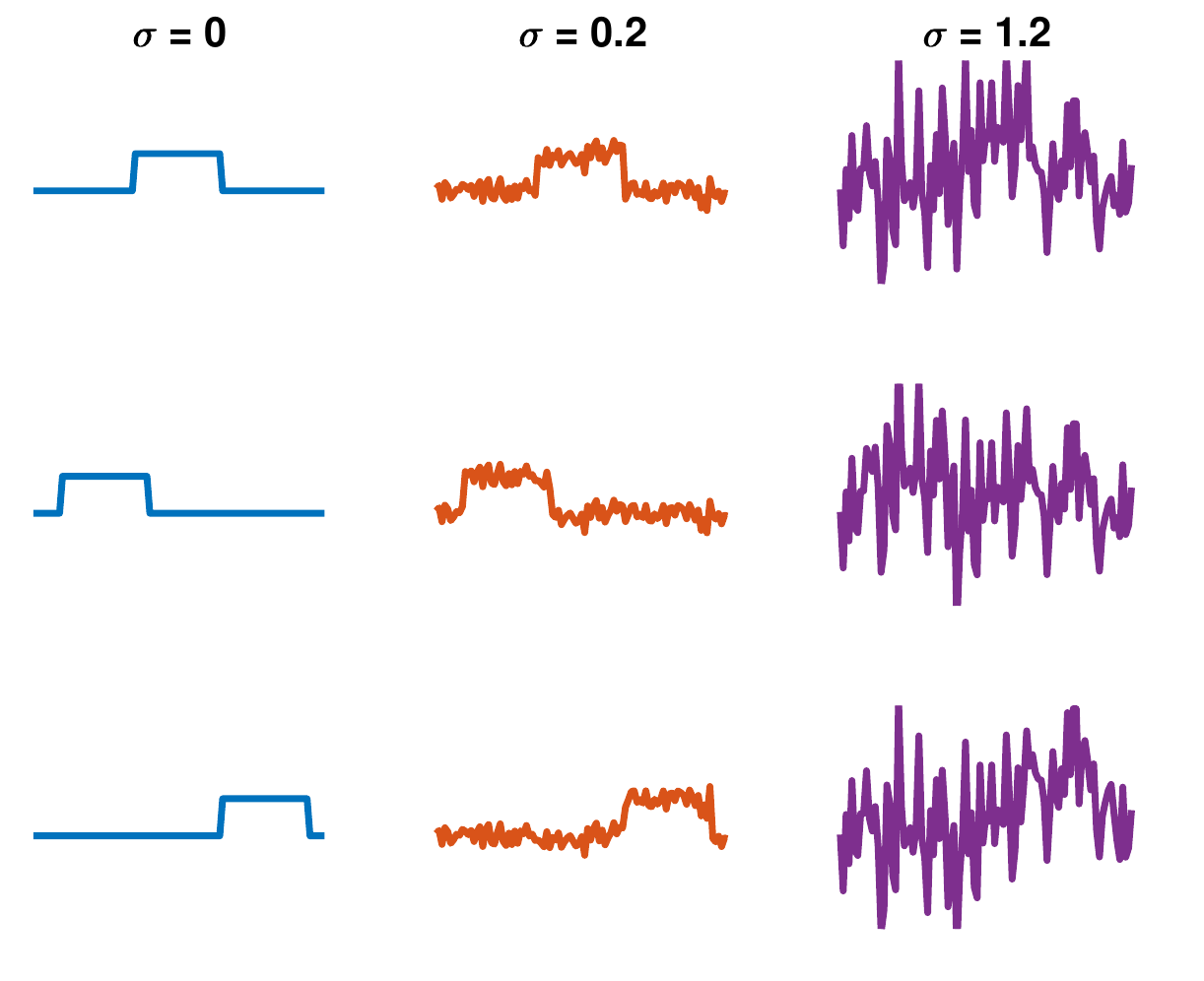}
	\caption{\label{fig:1Dmra} An example of the one-dimensional MRA setup, where a signal in $\R^N$ is acted upon by random elements of the group of circular shifts $\Z_N$. 
		The left column shows three shifted copies of the signal, corresponding to noiseless measurements (i.e., $\sigma=0$). In this case, all three observations are admissible solutions as the signal can be estimated only up to a group action. The middle and right columns present the same observations, with low noise level of $\sigma=0.2$ and high noise level of $\sigma=1.2$. This paper focuses on the extremely high noise level $\sigma\to\infty$ when the signal is swamped by noise. Figure credit:~\cite{bendory2020single}. }
\end{figure}

The MRA model is mainly motivated by single-particle cryo-electron microscopy (cryo-EM)---an increasingly popular technology that has joined X-ray crystallography as one of the two leading technologies to reconstruct molecular structures~\cite{frank2006three,nogales2016development}. 
Under some simplified assumptions, the cryo-EM generative model reads 
\begin{equation} \label{eq:cryoEM_model}
	y = T (g\cdot f) + \varepsilon, \quad  g\in G,
\end{equation}
where $G$ is the group of three-dimensional rotations SO(3), and $T$ is a tomographic projection  acting by 
\begin{equation} \label{eq:tomographic_projection}
	Tf(x_1,x_2)=\int_\R f(x_1,x_2,x_3)dx_3.
\end{equation}
The celebrated Fourier Slice Theorem states that the 2-D Fourier transform of a tomographic projection is equal to a 2-D slice of the  volume's 3-D Fourier transform~\cite{natterer2001mathematics}. 
This motivates analyzing  the cryo-EM model in Fourier space, which is indeed the common practice. 
Notably, the noise level in cryo-EM images is very high; Figure~\ref{fig:cryo} shows several experimental cryo-EM images.  
We refer the reader to recent surveys on the mathematical and algorithmic aspects of cryo-EM~\cite{singer2018mathematics,bendory2020single,singer2020computational}.

While the \rev{random linear} action of 3-D rotation followed by a tomographic projection 
does not constitute a group action,  we will show that the results of this paper apply to the cryo-EM model as well.
The emerging molecular reconstruction technology of X-ray free-electron lasers (XFEL) also obeys the model~\eqref{eq:cryoEM_model} with one important distinction: the phases in Fourier space are unavailable~\cite{spence2012x,maia2016trickle}.  

\begin{figure}[h!]
	\centering
	\begin{subfigure}[h]{0.45\textwidth}
		\centering
		\includegraphics[scale=0.8]{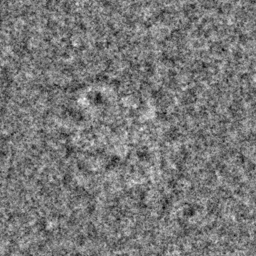}		
		\caption{\texttt{EMPIAR-10028}}
	\end{subfigure} \hfill 
	\begin{subfigure}[h]{0.45\textwidth}
		\centering
		\includegraphics[scale=0.8]{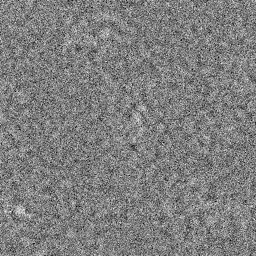}
		\caption{\texttt{EMPIAR-10073}}
	\end{subfigure} \hfill 
	
	\begin{subfigure}[h]{0.45\textwidth}
		\centering
		\includegraphics[scale=0.8]{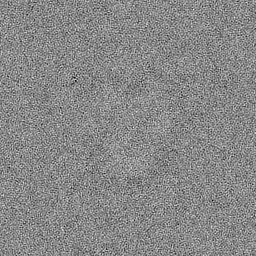}
		\caption{\texttt{EMPIAR-10081}}
	\end{subfigure} \hfill
	\begin{subfigure}[h]{0.45\textwidth}
		\centering
		\includegraphics[scale=0.8]{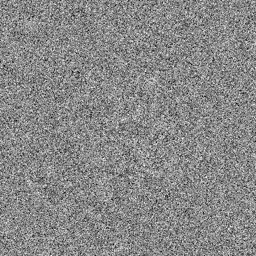}
		\caption{\texttt{EMPIAR-10061}}
	\end{subfigure}
	\caption{\label{fig:cryo} 
		A collection of cryo-EM experimental images, taken from the the Electron Microscopy Public Image Archive (EMPIAR) \texttt{https://www.ebi.ac.uk/empiar/}. 
		The corresponding molecular structures are available at the  Electron Microscopy Data Bank (EMDB) \texttt{https://www.ebi.ac.uk/emdb}.
		(a) \texttt{EMPIAR-10028} (corresponding entry \texttt{EMD-2660}): Plasmodium falciparum 80S ribosome bound to the anti-protozoan drug emetine~\cite{wong2014cryo};
		(b) \texttt{EMPIAR-10073}  (corresponding entry \texttt{EMD-8012}): yeast spliceosomal U4/U6.U5 tri-snRNP~\cite{nguyen2016cryo};
		(c) \texttt{EMPIAR-10081} (corresponding entry \texttt{EMD-8511}): human HCN1 hyperpolarization-activated cyclic nucleotide-gated ion channel~\cite{lee2017structures}; 
		(d) \texttt{EMPIAR-10061} (corresponding entry \texttt{EMD-2984}): beta-galactosidase in complex with a cell-permeant inhibitor~\cite{bartesaghi20152}.} 
\end{figure}

\paragraph{MRA analysis in the high and low noise regimes}
In the low noise regime, when the signal dominates the noise, the group elements $g_1,\ldots,g_n\in G$ can be usually estimated accurately from the observations, see for example~\cite{singer2011angular,boumal2016nonconvex,chen2018projected,rosen2019se,ling2022near}. 
If we denote the estimated group elements by $\hg_1,\ldots,\hg_n\in G$, then an estimator $\hf$ can be constructed  by applying the inverse group elements and averaging:
\begin{equation*}
	\hf = \frac{1}{n}\sum_{i=1}^n \hg_i^{-1}y_i.
\end{equation*}
In cryo-EM, while the statistical model is more involved~\eqref{eq:cryoEM_model}, the group elements can be estimated as well based on the common-lines geometrical property~\cite{singer2011three,shkolnisky2012viewing}, and thus recovering the molecular structure reduces to a linear inverse problem  for which many effective techniques exist~\cite{natterer2001mathematics}.

Motivated by cryo-EM, this work focuses on the high noise regime, when the signal is swamped by noise, and thus the group elements cannot be accurately estimated~\cite{bendory2018toward,aguerrebere2016fundamental,romanov2021multi}. 
Consequently, one needs to develop methods to estimate the signal $f$ directly, without estimating the group elements as an intermediate step.
In particular, two main estimation methods dominate the  MRA literature. 
The first is based on optimizing the marginalized likelihood function, using methods such as expectation-maximization~\cite{bendory2017bispectrum,ma2019heterogeneous,bendory2022super,bendory2022dihedral,janco2022accelerated,kreymer2022approximate,bendory2022sparse}. While these techniques are highly successful, and are the state-of-the-art methods in cryo-EM~\cite{sigworth1998maximum,scheres2012relion,punjani2017cryosparc}, their properties are currently not well-understood~\cite{fan2020likelihood,katsevich2020likelihood,brunel2019learning,fan2021maximum}.
The second approach is based on the method of moments---a classical parameter estimation technique, tracing back to the seminal paper of Pearson~\cite{pearson1894contributions}. 
In the method of moment\rev{s}, the idea is to 
find a signal which is consistent with the empirical moments (which are estimates of the population moments).
The method of moments was applied to a wide range of MRA models~\cite{bendory2017bispectrum,bandeira2017estimation,perry2019sample,abbe2018multireference,chen2018spectral,boumal2018heterogeneous,ma2019heterogeneous,pumir2021generalized,aizenbud2021rank,hirn2021wavelet,liu2021algorithms,bendory2022super,bendory2022dihedral,bendory2022sparse,ghosh2022sparse,abas2022generalized}, as well as  to construct {ab initio} models in cryo-EM~\cite{bhamre2015orthogonal,bhamre2017anisotropic,levin20183d,sharon2020method,bendory2018toward,lan2021random,huang2022orthogonal} and XFEL~\cite{saldin2010structure,donatelli2015iterative}. 
In this work, we focus on the method of moments due to its appealing statistical properties that are introduced next.

\paragraph{Sample complexity}
In the high noise regime $\sigma\to\infty$, when the dimension of the signal is finite, it was shown that  a necessary condition for recovery is $n= \omega(\sigma^{2d})$ (namely, $n/\sigma^{2d}\to\infty$ as $n,\sigma\to\infty$), where $d$ is the lowest-order moment that determines the orbit of the signal uniquely\footnote{This is not necessarily true when the dimension of the signal grows with the noise level and the number of observations~\cite{romanov2021multi,dou2022rates}.}~\cite{bandeira202optimal,abbe2018estimation,perry2019sample}.
\rev{Therefore, determining the sample complexity in the high noise regime reduces to analyzing moment equations.}
In~\cite{bandeira2017estimation, edidin2023orbit}, it was shown that 
in many cases, if the distribution of the group elements is uniform (as we assume in this paper), 
$d=3$ suffices to determine  almost all signals, implying sample complexity of  $n=\omega(\sigma^6)$; this is also true for cryo-EM.
Moreover, in some cases, an efficient algorithm to recover the signal at the optimal estimation rate was devised. 
For example, if $V\in\R^N$ and $G=\Z_N$, a generic signal can be recovered efficiently from the third moment, called the bispectrum, using a variety of efficient  algorithms~\cite{bendory2017bispectrum,perry2019sample}; see also~\cite{liu2021algorithms}.

We mention that when the distribution of the group elements is non-uniform, the MRA problem is usually easier, and signal recovery  may be possible from the second moment~\cite{abbe2018multireference,bendory2022dihedral,sharon2020method}.
In fact, uniform distribution can be thought of as the worst-case scenario of the MRA model~\eqref{eq:mra} since, no matter what the original distribution over the group elements is, one can force a uniform distribution by generating a new set of observations:
\begin{equation}
	z_i = \tilde g_i\cdot y_i = (\tilde g_ig_i)\cdot f + \tilde g_i\cdot \varepsilon_i,
\end{equation}
where $\tilde g_i$ is drawn from a uniform distribution (and thus  the distribution of $\tilde gg$ is also uniform). 
This is not necessarily true for the cryo-EM model.

\paragraph{Main contributions: Signal recovery from the second moment}
This work studies signal recovery from the second moment of the MRA observations: \begin{equation}
	\E yy^* = \int_{G} (g\cdot f)(g\cdot f)^*dg \rev{+\sigma^2I}.
\end{equation}
\rev{Since we assume to know $\sigma^2$, we henceforth omit the effect of the noise.}
If  we view $g\cdot f$ as a column vector, then $(g\cdot f)(g\cdot f)^*$ is a rank-one matrix, and thus the second moment is an integral over rank-one Hermitian matrices. 
Recall that the second moment can be estimated from samples 
\begin{equation}
	\E yy^* \approx \frac{1}{n}\sum_{i=1}^n y_iy_i^*.
\end{equation}
When $n=\omega(\sigma^4)$, \rev{$\frac{1}{n}\sum_{i=1}^n y_iy_i^*$ almost surely convergences to $\E yy^*$. }
 \rev{In this paper, we identify a class of signals that are determined uniquely by $\E yy^*$. This in turn implies that the sample complexity of the problem, for this class of signals, is  $n=\omega(\sigma^4)$ and not $n=\omega(\sigma^6)$ as for generic signals~\cite{bandeira2017estimation,bendory2017bispectrum,perry2019sample}.}

The first contribution of this paper, introduced in Section~\ref{sec:second_moment}, is \rev{a precise characterization of}  the set of signals 
\rev{having the same second moment.}
Through the lense of representation theory, we show in Theorem~\ref{prop.ambiguity}
that the second moment determines the signal up to a set of unitary matrices,  whose dimension is governed by the
decomposition of the space of signals into irreducible representations
of the group. While the unitary matrix ambiguities have been identified before in some special cases~\cite{kam1980reconstruction,bhamre2015orthogonal}, we show that the same pattern of ambiguities governs all MRA models.
Section~\ref{sec:ambiguities_examples} provides specific examples.

To resolve these ambiguities, we suggest assuming the signal is sparse under some basis. This is a common assumption in many problems in signal processing and machine learning,   such as regression~\cite{tibshirani1996regression,goodfellow2016deep}, compressed sensing~\cite{donoho2006compressed,candes2006robust,eldar2012compressed}, and various image processing applications~\cite{elad2010sparse}.
\rev{Note that the representations of compact groups that we consider are typically spaces of $L^2$ functions on a domain such as $\R^3$. As such, they do not come equipped with a canonical basis, so the assumption we make is that our signal is sparse with respect to a generic basis. The notion of generic basis comes from algebraic geometry and makes use of the fact that the set of
  all possible bases of a vector space is an algebraic variety.
When we say that a result holds for a generic basis, it means that there is 
a Zariski open set of bases for which the statement of the result holds. In particular, it holds for almost all bases. For more detail, see Section \ref{sec:sparsity_conditions}.
}

Our second contribution, presented  in Section~\ref{sec:sparsity} and summarized in Theorem~\ref{thm.mainMRA}, 
describes   the sparsity level under which the orbit of a 
generic sparse signal can be recovered from the second moment. That is, the sparsity level that allows resolving the unknown unitary matrices. 
This implies that merely $n=\omega(\sigma^4)$ observations are required for accurate signal recovery.
The sparsity level is bounded by a factor that depends on the dimensions of the irreducible representations and their multiplicities.
The proof of Theorem~\ref{thm.mainMRA} relies on tools from algebraic geometry and representation theory.
Specific results are provided in Section~\ref{sec:examples_sparsity}.

\paragraph{Implications to cryo-EM}
In Section~\ref{sec:cryoEM},  
we show that the second moment of the cryo-EM model~\eqref{eq:cryoEM_model} is the same as of the MRA model~\eqref{eq:mra}, when $G$ is the group of three-dimensional rotations SO(3) and $V$ is the space of band-limited functions on the ball.
Namely, the tomographic projection operator~\eqref{eq:tomographic_projection} does not change the second moment of the observations.   
We introduce this model in detail in Section~\ref{sec:cryoEM} and particularize the main result of this paper to cryo-EM in Theorem~\ref{thm.mainCRYO}. We now state this result informally.

\begin{theorem}[Informal theorem for cryo-EM] \label{thm:cryoEM_informal}
	In the cryo-EM model~\eqref{eq:cryoEM_model} (described in detail in Section~\ref{sec:cryoEM_model}), a generic $K$-sparse function $f \in V$ 
	is uniquely determined by the second moment for $K\lessapprox
	N/3$, where  $N = \dim V$.
\end{theorem}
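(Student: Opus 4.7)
The plan is to deduce Theorem~\ref{thm:cryoEM_informal} from Theorem~\ref{thm.mainMRA} by reducing the cryo-EM problem to an instance of the MRA problem in which $G=\SO(3)$ acts on the space $V$ of band-limited functions on the ball in $\R^3$. The crucial reduction step is the claim announced in Section~\ref{sec:cryoEM} that the tomographic projection in~\eqref{eq:tomographic_projection} does not alter the second moment of the observations. Once this is established, every ambiguity and every sparsity-based recovery statement for the MRA model with $G=\SO(3)$ transfers verbatim to the cryo-EM model.

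First I would verify the second-moment equivalence. Working in Fourier space, the Fourier Slice Theorem identifies $Ty$ with the restriction of $\hf$ to a central plane. The second moment integrates over all rotations with respect to Haar measure on $\SO(3)$, and by a change-of-variables argument the family of central slices obtained from rotations of $f$ ranges, in a measure-preserving way, over all central planes of $\hf$. A short calculation then shows that $\E\,yy^*$ for the cryo-EM model is the image, under an explicit and injective linear map, of $\E\,yy^*$ for the underlying MRA model on $V$, so the two determine the signal orbit through the same set of ambiguities.

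Next I would invoke the representation-theoretic picture from Theorem~\ref{prop.ambiguity}. The space $V$ decomposes under $\SO(3)$ as $V \cong \bigoplus_{\ell=0}^{L} W_\ell \otimes \C^{m_\ell}$, where $W_\ell$ is the irreducible representation of dimension $2\ell+1$ and $m_\ell$ is its multiplicity determined by the band-limit and the choice of radial basis; hence $N=\sum_\ell m_\ell(2\ell+1)$. By Theorem~\ref{prop.ambiguity}, the ambiguity group is the product of unitary groups $H=\prod_\ell U(m_\ell)$, acting trivially on the angular factor and by its defining representation on the multiplicity space. Applying Theorem~\ref{thm.mainMRA} in this setting reduces the cryo-EM statement to checking that, for a generic basis, a $K$-sparse $f \in V$ is the unique $K$-sparse element in its $H$-orbit whenever $K$ is below the threshold predicted by the general theorem.

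Finally, I would compute that threshold. The dimension count in Theorem~\ref{thm.mainMRA} bounds $K$ by a quantity of the form $(N - \dim H)/2$ times some combinatorial factor, or more precisely balances the dimension $\sum_\ell m_\ell^2$ of the ambiguity group against the $2K$ degrees of freedom available when comparing two $K$-sparse vectors. Substituting the $\SO(3)$ data $\dim W_\ell = 2\ell+1$ into the generic sparsity inequality yields, after summation against $m_\ell$, a bound of the form $3K \le N$, i.e.\ $K \lessapprox N/3$. The main obstacle I anticipate is precisely this arithmetic step: tracking the factor $3$ back to the fact that the isotypic components of $\SO(3)$ have odd dimensions $2\ell+1$ and that the ambiguity in each isotypic component is a full unitary group acting on the multiplicity factor, and confirming that the genericity hypothesis of Theorem~\ref{thm.mainMRA} is compatible with the change of coordinates used in the second-moment reduction.
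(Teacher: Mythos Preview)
Your overall strategy---prove that the tomographic projection leaves the second moment unchanged (the paper's Lemma~\ref{lem:2ndmoment_cryo-EM}), then feed the resulting $\SO(3)$-MRA problem into Theorem~\ref{thm.mainMRA}---is exactly what the paper does. The first paragraph of your plan is fine in outline.

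The genuine gap is in the second and third paragraphs: you have the ambiguity group backwards. By Theorem~\ref{prop.ambiguity}, if $V=\bigoplus_\ell V_\ell^{R_\ell}$ with $\dim V_\ell=N_\ell$, then $H=\prod_\ell U(N_\ell)$, acting on the \emph{irreducible} (angular) factor, not on the multiplicity space. Concretely, writing $f$ as an $L$-tuple of $N_\ell\times R_\ell$ matrices $A_\ell$, the ambiguity is $A_\ell\mapsto U_\ell A_\ell$ with $U_\ell\in U(N_\ell)$; for cryo-EM this is $\prod_{\ell=0}^{L}U(2\ell+1)$ (or $O(2\ell+1)$ in the real case), not $\prod_\ell U(m_\ell)$. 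This matters because Theorem~\ref{thm.mainMRA} gives the bound $K\le N-M$ with $M=\sum_\ell\min(N_\ell R_\ell,N_\ell^2)$, not anything of the form $(N-\dim H)/2$. With your swapped roles and, say, $R\ge 2L+1$, one would get $M=\sum_\ell\min(N_\ell R,R^2)=\sum_\ell N_\ell R=N$, hence the vacuous bound $K\le 0$.

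The correct arithmetic, which is where the $1/3$ actually comes from, is: take $R_\ell=R\ge 2L+1$ so that $N_\ell\le R$ for every $\ell$, whence $M=\sum_{\ell=0}^{L}(2\ell+1)^2=\tfrac{4}{3}L^3+O(L^2)$, while $N=R(L+1)^2\ge (2L+1)(L+1)^2=2L^3+O(L^2)$. Thus $K\le N-M$ gives $K/N\le 1-\tfrac{(4/3)L^3}{2L^3}+o(1)=\tfrac{1}{3}+o(1)$. The factor $3$ is the ratio of $\sum(2\ell+1)^2$ to $(2L+1)(L+1)^2$ at leading order, not a parity fact about the $2\ell+1$'s.
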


Theorem~\ref{thm:cryoEM_informal} implies that sparse structures can 
be recovered, in the high noise regime, with only $n=\omega(\sigma^4)$ observations, improving upon $n=\omega(\sigma^6)$ for generic structures~\cite{bandeira2017estimation}. 
Figure~\ref{fig:wavelet} shows the distribution of wavelet coefficients (a standard choice of basis in many signal processing applications~\cite{mallat1999wavelet}) of a few molecular structures. 
Evidently, less than 1/3 of the coefficients capture almost all the
energy of the volumes, suggesting that the bound of Theorem~\ref{thm:cryoEM_informal} is  reasonable for typical molecular structures. 

A recent paper~\cite{bendory2022autocorrelation} showed 
that a structure composed of ideal point masses (possibly convolved with a kernel with a non-vanishing Fourier transform) can be recovered from the second moment. However, the technique of~\cite{bendory2022autocorrelation} is tailored for this specific model.
The same paper also suggests to recover a 3-D structure from the second moment based on a sparse expansion in a wavelet basis.
\rev{Our result implies that for a given wavelet basis then with probability one the generic signal whose expansion is sufficiently sparse with respect to that basis can be recovered from its second moment. Moreover, for a given basis, there is, in principle, a computational technique to test whether Theorem~\ref{thm:cryoEM_informal} holds for that basis. See Remarks~\ref{rem.generic}
  and~\ref{rem.test} for more detail.
 }

\begin{figure}[h!]
	\centering
	\begin{subfigure}[h]{0.45\textwidth}
		\centering
		\includegraphics[scale=0.4]{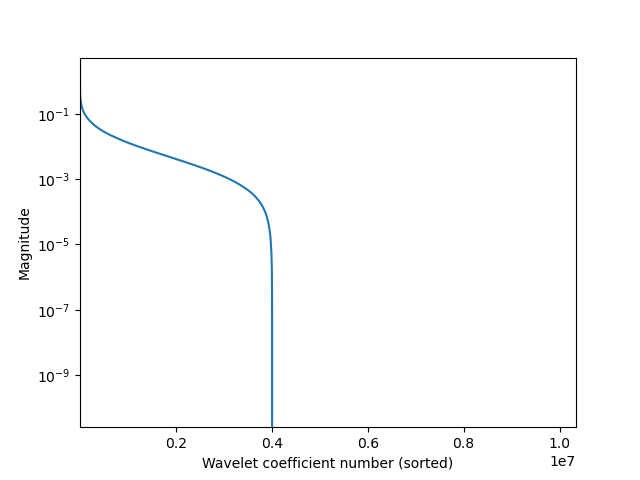}
		\caption{\texttt{EMD-2660}}
	\end{subfigure} \hfill 
	\begin{subfigure}[h]{0.45\textwidth}
		\centering
		\includegraphics[scale=0.4]{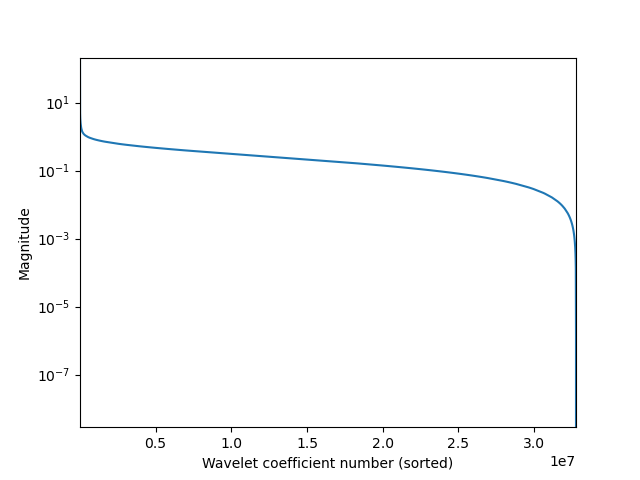}
		\caption{\texttt{EMD-8012}}
	\end{subfigure} \hfill 
	
	\begin{subfigure}[h]{0.45\textwidth}
		\centering
		\includegraphics[scale=0.4]{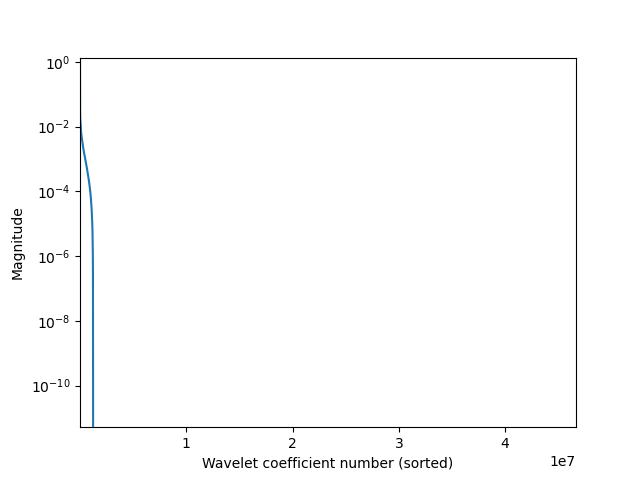}
		\caption{\texttt{EMD-8511}}
	\end{subfigure} \hfill
	\begin{subfigure}[h]{0.45\textwidth}
		\centering
		\includegraphics[scale=0.4]{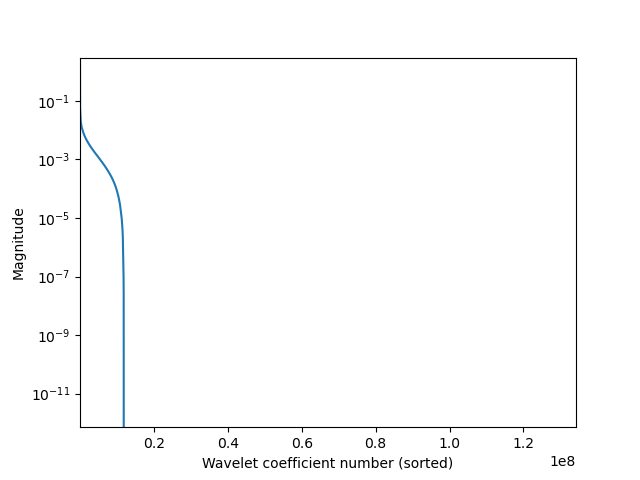}
		\caption{\texttt{EMD-2984}}
	\end{subfigure}
	\caption{\label{fig:wavelet} 
		The sorted wavelet coefficients of  cryo-EM structures whose experimental images are presented in Figure~\ref{fig:cryo}. The structures were downloaded from the   Electron Microscopy Data Bank (EMDB) \texttt{https://www.ebi.ac.uk/emdb}.  The structures were expanded using Haar wavelets, where the number of coefficients is approximately the same as the number of voxels. 
		Besides \texttt{EMD-8012}, all the volumes energy (i.e., the squared norm of the coefficients) is captured by less that one third of the coefficients, which is the bound of Theorem~\ref{thm:cryoEM_informal}. For \texttt{EMD-8012}, the same fraction of wavelet coefficients captures  more than 91\% of its energy.}
\end{figure}

\paragraph{Crystallographic phase retrieval}
The second moment of the  MRA model, where random elements of the group of circular shifts $\Z_N$ act on real signals in $\R^N$, is equivalent to the \rev{squared} absolute values of the Fourier transform of the signal,  known as the  power spectrum. 
Recovering a signal from its power spectrum is called the phase retrieval problem and it has numerous applications in signal processing; see recent surveys and references therein~\cite{shechtman2015phase,bendory2017fourier,grohs2020phase,bendory2022algebraic}. 

\rev{Crystals are often modeled as functions on a finite abelian group (typically $\Z_N$), which corresponds to the regular representation of
  the group. For this representation there is a natural notion of sparsity which
  corresponds to requiring that the function is non-zero only on a small subset
  of elements of the group. Real valued functions on $\Z_N$ are identified with $\R^N$ and this notion of sparsity corresponds to sparsity in the standard basis of $\R^N$.}
Recovering a sparse signal from the power spectrum is the main computational challenge in X-ray crystallography: a leading method for elucidating the atomic structure of molecules. 
This is by far the most important phase retrieval application. 
We discuss this problem in detail in Section~\ref{sec:phase_retrieval} 
\rev{and explain how the techniques of this paper can be used to prove
  that a $K$-sparse signal $f \in \R^N$, under a generic basis, 
  can be recovered from its power spectrum provided that $K \leq N/2$.}

\paragraph{Organization of the paper}
The rest of the paper is organized as follows. 
Section~\ref{sec:second_moment} formulates the second moment of the MRA model~\eqref{eq:mra} and shows that it determines the signal up to a set of unitary matrices.  The section also provides several examples. 
Section~\ref{sec:sparsity} derives a bound on the sparsity level that allows
\rev{for} unique recovery from the second moment (Theorem~\ref{thm.mainMRA}) in terms of the dimension and multiplicity of the irreducible representations, 
and provides examples. 
Section~\ref{sec:cryoEM} focuses on cryo-EM: the main motivation of this paper. 
We formulate the cryo-EM model in detail,  derive explicitly the ambiguities of the second moment, and deduce sparsity conditions allowing unique recovery (Theorem~\ref{thm.mainCRYO}). Section~\ref{sec:phase_retrieval} discusses the  crystallographic phase retrieval problem.
Section~\ref{sec:discussion} concludes this work, and delineates future research directions. 
The supplementary material provides necessary background in representation theory.

\section{The second moment and symmetries} \label{sec:second_moment}
This section lays out the mathematical background for the second moments of the MRA model~\eqref{eq:mra} for a compact group $G$ acting on an $N$-dimensional
real or complex
vector space~$V$.
Following standard terminology, we refer to a vector space $V$ equipped with an action of a group $G$ as a {\em representation} of $G$.
Our goal is to use \rev{classical} methods from the representation theory of compact groups to understand the information obtained from the second moment.
In the supplementary material, we provide a necessary background in representation theory.

Any representation of a compact group is {\em unitary}.
This means that elements of $G$ act on $V$ as unitary transformations.
\rev{In particular, the action preserves a Hermitian inner product.
By Weyl's unitarian trick, this inner product can be obtained
 by averaging any chosen inner product on $V$ over the group.}
If $V$ is a real vector space, then the action \rev{of} $G$ is {\em orthogonal},  meaning that
elements of $G$ act by orthogonal transformations.

Let $\K=\R$ or $\K=\C$ be the field.
Assuming that the distribution on the group $G$ is uniform (Haar), then
a choice of basis for an $N$-dimensional representation
$V=\K^N$ expresses the second moment as a function $\K^N \to \K^{N^2}$
given by the formula
\begin{equation} \label{eq.2ndmomentcoord}
	f \mapsto \int_G (g\cdot f) (g \cdot f)^* dg,
\end{equation}
where $\K^{N^2}=\Hom(V,V)$ is the vector space of linear transformations $V \to V$. 
The vector $f$ is viewed as column vector so
$(g \cdot f) (g \cdot f)^*$ is an $N \times N$, rank-one Hermitian matrix.

The second moment can also be defined without the use of coordinates, using tensor notation, as
a map $V \to V \otimes V^*$,
\begin{equation} \label{eq.2ndmomentabs}
	f \mapsto \int_G (g\cdot f) \otimes \overline{(g \cdot f)}dg.
\end{equation}
We will use both~\eqref{eq.2ndmomentcoord} and~\eqref{eq.2ndmomentabs} interchangeably. The reason that these formulations are equivalent is that
there is isomorphism of representations $V \otimes V^* \to \Hom(V,V)$ as discussed in the supplementary material. 
 If we choose an orthonormal basis
for $V$, then the tensor $f_1 \otimes \overline{f_2}$ corresponds to the matrix
$f_1f_2^*$.

Ultimately, we will view elements of $V$ as functions $D \to \C$, where
$D$ is some domain on which $G$ acts. 
For example, in cryo-EM $G= SO(3)$,
and $V$ is the subspace of $L^2(\C^3)$
consisting of the
Fourier transforms of real-valued functions
in $L^2(\R^3)$; this problem is discussed in detail in Section~\ref{sec:cryoEM}. 
The second moment of a function $f: D \to \C$ can be viewed as the
function $m^2_f \colon D\times D \to \C$, where 
\begin{equation} \label{eq.2ndmomentfunction}
	m^2_f(x_1,x_2) = \int_G (g \cdot f(x_1))\overline{(g \cdot f(x_2))} \;dg,
\end{equation}
where $g\cdot f  \colon D \to \rev{\C}$
is defined by $gf(x) = f(g^{-1}x)$.

\subsection{The second moment of an irreducible representation of $G$}

Recall that a representation is {\em irreducible} if it has no non-zero proper
$G$-invariant subspaces. \rev{Examples of reducible and irreducible representations are given in the supplementary material.}
If the representation $V$ is irreducible, then the following
proposition shows
that the second moment gives very little information about a vector $f \in V$.
\begin{proposition} 
	Let $V$ be an $N$-dimensional irreducible unitary representation of a compact group $G$ and identify $V$ with $\C^N$ via a choice of orthonormal basis
	$f_1, \ldots , f_N$ of $V$.
	Then, as a map $\C^N \to \C^{N^2}$, the
	second moment is given by the formula
	\begin{equation} \label{eq.matrixform}
		f \mapsto \rev{{|f|^2\over{N}}} I_N,
	\end{equation}
	where $I_N$ is the $N \times N$ identity matrix.
	In tensor notation, the second moment is the
	map $V \mapsto V \otimes V^*$ given by 
	\begin{equation} \label{eq:second_moment}
		f \mapsto  \rev{{|f|^2\over N}}\sum_{i=1}^N f_i \otimes \overline{f_i}.
	\end{equation}
	
\end{proposition}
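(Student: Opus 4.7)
The plan is to recognize the second moment
$$M(f) := \int_G (g\cdot f)(g\cdot f)^* \, dg$$
as a linear endomorphism of $V$ and to show that, for each fixed $f \in V$, this endomorphism is $G$-equivariant. Indeed, for any $h \in G$, unitarity of the action gives $h^* = h^{-1}$, and left-invariance of the Haar measure (substituting $g \mapsto hg$) yields
$$h\, M(f)\, h^{-1} = \int_G (hg\cdot f)(hg\cdot f)^* \, dg = \int_G (g'\cdot f)(g'\cdot f)^* \, dg' = M(f),$$
so $M(f)$ commutes with every element of $G$. Since $V$ is an irreducible representation, Schur's lemma then forces $M(f) = c(f)\, I_N$ for some scalar $c(f)$.

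To pin down $c(f)$, I would take traces. Linearity of the trace combined with unitarity of the action gives
$$\trace(M(f)) = \int_G \trace\!\left((g \cdot f)(g \cdot f)^*\right) dg = \int_G |g \cdot f|^2 \, dg = |f|^2,$$
while $\trace(c(f) I_N) = N\, c(f)$. Equating the two expressions yields $c(f) = |f|^2/N$, which is precisely the scalar matrix form \eqref{eq.matrixform}.

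The tensor version \eqref{eq:second_moment} then follows by transport along the canonical $G$-equivariant isomorphism $V \otimes V^* \cong \Hom(V,V)$ (discussed in the supplementary material), under which a simple tensor $v \otimes \overline{w}$ corresponds to the rank-one operator $u \mapsto \langle u,w\rangle\, v$. With respect to the orthonormal basis $f_1, \ldots, f_N$, the identity operator $I_N$ corresponds to $\sum_{i=1}^N f_i \otimes \overline{f_i}$, so multiplying by the scalar $|f|^2/N$ converts \eqref{eq.matrixform} into \eqref{eq:second_moment}.

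I do not expect a genuine obstacle here: the proof is essentially a textbook application of Schur's lemma. The only piece that warrants care is checking that $M(f)$ is $G$-equivariant, which relies on using \emph{both} unitarity (to move $h^*$ to the other factor as $h^{-1}$) and left-invariance of the Haar measure; once this is verified, the irreducibility hypothesis immediately collapses the second moment to a scalar matrix, and the constant is determined by a one-line trace computation.
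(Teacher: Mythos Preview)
Your proposal is correct and follows essentially the same approach as the paper: both arguments identify the second moment as a $G$-invariant endomorphism of $V$, invoke Schur's Lemma to reduce it to a scalar, and fix the scalar via a trace computation. Your verification of $G$-equivariance (conjugating by $h$ and using left-invariance of Haar) is slightly more explicit than the paper's, which simply asserts invariance, but the substance is identical.
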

\begin{proof}
	If we identify the Hermitian matrix $m^2_f = \int_G (g \cdot \rev{f}) (g \cdot \rev{f})^*dg $ as giving a linear transformation
	$V \to V$,  then the second moment defines a map $V \to \Hom(V,V)$,
	where $\Hom(V,V)$ is the group of linear transformations $V \to V$. 
	Since
	the second moment is by definition invariant under the action of $G$
	on $V$ (i.e., $f$ and $g\cdot f$ both yield the matrix $m^2_f$), 
	the matrix $m_f^2$
	defines a $G$-invariant linear transformation on $V$.
	However since $V$ is irreducible,
	by Schur's Lemma, any $G$-invariant
	linear transformation $V \to V$ is a scalar multiple of the identity.
        \rev{Since $G$ acts by unitary transformations, $\trace ((g\cdot f) (g\cdot f)^*) =
          \trace (f f^*) = |f|^2$ for any $g \in G$. Thus,
          $$\trace m^2_f = \int_G \trace ((g\cdot f) (g\cdot f)^*)\;dg =|f|^2.$$}

	The formula~\eqref{eq:second_moment}
	is equivalent to the first formula because under the identification of
	$V \otimes V^*$ with $\Hom(V,V) = \K^{N^2}$, 
	the tensor $\sum_{i=1}^N f_i \otimes \overline{f_i}$ corresponds to the
	identity matrix.
\end{proof}

\subsection{The second moment for multiple copies of an irreducible representation}  \label{sec:multiple_copies}

The following discussion is motivated by the situation in cryo-EM, where we view
$\R^3$ as a collection of spherical shells.
\rev{In other words, we model $SO(3)$ acting on $L^2(\R^3)$ by taking a
  number of copies of $L^2(S^2)$.}
This is a standard model in cryo-EM, and is  introduced in detail in Section~\ref{sec:cryoEM}. 

Consider the case where the representation
$V$ decomposes as the direct
sum of $R$ copies of a single irreducible representation $V_0$.
In other words, there is a $G$-invariant isomorphism 
$V \simeq V_0^{\oplus R}$. This means that 
any vector 
$f \in V$
can be decomposed uniquely
as
$f = f[1] + \ldots + f[R]$, with $f[r]$
in the $r$-th copy of $V_0$.
The summands are invariant under the action of $G$ so $(g\cdot f)[r] =
g\cdot f[r]$.

Since $V$ decomposes as the sum $V_0^{\oplus R}$, the tensor product
$V \otimes V^*$ decomposes as the sum of tensor products
$\oplus_{i,j=1}^RV_0[i] \otimes V_0^*[j]$, where $V_0[r]$ indicates the $r$-th
copy of $V_0$ in the decomposition of $V$.
In particular, using tensor notation for the second moment, 
we can decompose
\begin{eqnarray} \label{eq:second_moment_multiple}
	m^2_f  =  \int_G (g\cdot f) \otimes \overline{(g\cdot f)} dg= &\sum_{i,j=1}^R m^2_f[i,j], 
\end{eqnarray}
where
\begin{equation} \label{eq.m2decomp}
	m^2_f[i,j]  =  \int_G (g\cdot f[i]) \otimes \overline{(g\cdot f[j])}dg \in
	V_0[i] \otimes V^*_0[j],
\end{equation}
is the component in the $(i,j)$-th summand of the tensor product $V \otimes V^*$.
Each of the summands in \eqref{eq.m2decomp} defines a $G$-invariant linear
transformation 
$V_0[i] \to V_0[j]$.

Let  $N_0 = \dim V_0$.
For suitable orthonormal bases $f_1[i], \ldots , f_{N_0}[i]$
and $f_1[j], \ldots , f_{N_0}[j]$ of $V_0[i]$ and $V_0[j]$, respectively, Schur's Lemma implies that 
$$m^2_f[i,j]=\int_G (g\cdot f[i]) \otimes \overline{(g \cdot f[j])} \;dg =
\rev{{\langle f[i], f[j] \rangle\over{N}}} \left( \sum_{k=1}^{N_0} f_k[i] \otimes \overline{f_k[j]}
\right).$$
To put this more directly, if we view an element of $V = V_0^{\oplus R}$
as an $R$-tuple
$f[1], \ldots , f[R]
$ of elements of $V_0$, then the second moment determines
all pairwise inner products $\langle f[i],  f[j] \rangle$.
Equivalently, if we consider the vectors $f[1], \ldots , f[R]$
as the column vectors of an $N_0  \times R$ matrix~$A$, then the second moment
determines the $R \times R$ Hermitian matrix $A^*A$. 
\rev{	Therefore, the vectors $f[1], \ldots , f[R] \in V_0$ are determined
	from their pairwise inner products up to the action of the unitary group
	$U(N_0)$, parameterizing the isometries of~$V_0$.
	If, as will be the case for cryo-EM, we know that each $f[r]$ lies in a conjugation invariant subspace
	of $V$ (for example, it is the Fourier transform
	of a real vector), then we can determine each $f[r]$ up to the action
	of a subgroup of $U(N_0)$, isomorphic to the real orthogonal group~$O(N_0)$.}
	
\subsection{The second moment of a general finite dimensional representation and its group of ambiguities}

A general finite dimensional representation of a compact group
can be decomposed as 
\begin{equation}
	V = \oplus_{\ell = 1}^L V_\ell^{\oplus R_\ell},
\end{equation}
with the $V_\ell$ are
distinct (non-isomorphic)
irreducible representations of $G$ of dimension $N_\ell$. 
An element of $f \in V$ has a unique $G$-invariant decomposition as a sum
\begin{equation} 
	f = \sum_{\ell = 1}^L \sum_{i= 1}^{R_\ell} f_\ell[i], \label{eq.decomp}
\end{equation} 
where $f_\ell[i]$ 
is in the $i$-th copy of the irreducible representation $V_\ell$.
In this case, the second moment decomposes as a sum of tensors
$\displaystyle{\int_G (g \cdot f_\ell[i]) \otimes \overline{(g \cdot f_{m}[j])}dg}$.
Each of these tensors
determines a $G$-invariant map $V_\ell[i] \to V_m[j]$. Since  $V_\ell[i]$ and $V_m[j]$ are non-isomorphic irreducible
representations, Schur's Lemma implies that there are no non-zero
$G$-invariant linear transformations
$ V_\ell[i] \to V_m[j]$
for $\ell\neq m$. In other words, 
we have a generalized orthogonality relation that the tensors
$\displaystyle{\int_G (g \cdot f_\ell[i]) \otimes\overline{ (g \cdot f_m[j])} \;dg}$ 
are zero if $\ell \neq m$
for all $i,j$.

Hence, the second moment decomposes as a sum
\begin{equation} \label{eq.general_second_moment}
	m_f^2 = \sum_{\ell =1}^L \sum_{i,j=1}^{R_\ell}
	\rev{{\langle f_\ell[i], f_\ell[j] \rangle\over{N_\ell}}}
	\left( \sum_{k=1}^{N_\ell} f_{k,\ell}[i] \otimes \overline{f_{k,\ell}[j]} \right),
\end{equation}
where the vectors  $f_{1, \ell}[i], \ldots , f_{N_\ell, \ell}[i]$
form
an orthonormal basis for the $i$-th copy of the $\ell$-th irreducible
representation $V_\ell$.
\rev{
\begin{remark} \label{rem.endomorphism}
 The second moment is a map
 $V \to \Hom_G(V,V)$. As noted by a referee, $\Hom_G(V,V)$ is the endomorphism ring of the $G$-module $V$. A result from classical representation theory, which follows from Schur's Lemma, 
states that this ring decomposes into a sum of matrix algebras
$\oplus_{\ell =1}^L \Mat(R_\ell)$ and our description of
the second moment can also be derived using this decomposition.
\end{remark}
}

\subsubsection{Functional representation of the second moment}
If, as will be the case for our model of cryo-EM, we view the
elements of $V$ as functions $f \colon D \to \C$, then we can reformulate
\eqref{eq.general_second_moment} as follows.
Suppose that $f_{1,\ell}[i], \ldots , f_{N_l,\ell}[i]$ are functions
$D \to \C$ which form an orthonormal basis for the $i$-th copy of the
$\ell$-th irreducible representation $V_\ell$. If we expand $f_{\ell}[i] =
\sum_{m=1}^{N_\ell} A^m_\ell[i] f_{m, \ell}[i]$, then the second moment
realized as a function $D \times D \to \C$ is expanded as
\begin{equation} \label{eq.functional_second_moment}
	m_f^2(x_1,x_2) = \sum_{\ell =1}^L \sum_{i,j=1}^{R_\ell}
	\left( \sum_{m = 1}^{N_\ell} A^m_\ell[i] \overline{ A^m_\ell}[j]\right)
	\left( \sum_{k=1}^{N_\ell} f_{k,\ell}[i](x_1) {f_{k,\ell}^*[j]}(x_2) \right),
\end{equation}
where $x_1, x_2$ are,  respectively, the variables on the first and second copies of $D$ respectively.

\subsubsection{The group of ambiguities} \label{sec:group_ambiguities}

The main result of this section is a characterization of the group of ambiguities
of the second moment. Later on, we provide a few explicit examples.

\rev{Suppose that $V$ decomposes as a sum of irreducible representations $V = \oplus_{\ell=1}^L V_\ell^{R_\ell}$,
  where $\dim V_\ell= N_\ell$, and let $H = \prod_{\ell = 1}^L{U(N_\ell)}$.
  The group $H$ acts on $V$ as follows. If $f \in V$ is represented by
  an $L$-tuple of $(A_1, \ldots , A_L)$ with $A_\ell$ an $N_\ell \times R_\ell$ matrix
  and $h = (U_1, \ldots , U_L)$ with $U_\ell\in U(N_\ell)$, then
  $h\cdot f = (U_1A_1, \ldots , U_L A_L)$.

\begin{theorem} \label{prop.ambiguity}
With the notation as above, a vector $f\in V$ is determined from the second moment up to the action
of the ambiguity group $H = \prod_{\ell = 1}^L{U(N_\ell)}$. That is,
$m^2_f = m^2_{f'}$ if and only $f = h \cdot f'$ for some $h \in H$.
\end{theorem}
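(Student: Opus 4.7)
The plan is to reduce Theorem~\ref{prop.ambiguity} to the standard linear-algebraic fact that two matrices with the same Gram matrix differ by left multiplication by a unitary. The key observation---already present implicitly in formula~\eqref{eq.general_second_moment} and in Remark~\ref{rem.endomorphism}---is that $m_f^2$ determines, and is determined by, the scalars $\langle f_\ell[i], f_\ell[j]\rangle$ for all $\ell$ and all $1 \le i,j \le R_\ell$. To see this, I would note that $m_f^2$ lies in $\Hom_G(V,V)$ by construction, and that this space decomposes via Schur's Lemma as $\bigoplus_{\ell=1}^L \Mat(R_\ell)$: the tensors $\Theta_\ell[i,j] := \sum_{k=1}^{N_\ell} f_{k,\ell}[i] \otimes \overline{f_{k,\ell}[j]}$ appearing in~\eqref{eq.general_second_moment} form a basis of $\Hom_G(V,V)$, and their coefficients in the expansion of $m_f^2$ are exactly $\langle f_\ell[i], f_\ell[j]\rangle / N_\ell$. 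Packaging these inner products into the Gram matrix $A_\ell^* A_\ell$, where $A_\ell$ is the $N_\ell \times R_\ell$ coefficient matrix of $(f_\ell[1],\ldots, f_\ell[R_\ell])$ in the chosen orthonormal basis of $V_\ell$, yields the equivalence: $m_f^2 = m_{f'}^2$ if and only if $A_\ell^* A_\ell = (A'_\ell)^* A'_\ell$ for every $\ell$.

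The \emph{if} direction of the theorem is then immediate, since the action $A_\ell \mapsto U_\ell A_\ell$ with $U_\ell \in U(N_\ell)$ leaves each $A_\ell^* A_\ell$ invariant. For the converse, I would construct the required $U_\ell \in U(N_\ell)$ by first defining it on the column span of $A_\ell$ via $U_\ell(A_\ell v) := A'_\ell v$. This is well-defined because $\ker A_\ell = \ker(A_\ell^* A_\ell) = \ker((A'_\ell)^* A'_\ell) = \ker A'_\ell$, and it is isometric because $\|A_\ell v\|^2 = v^* A_\ell^* A_\ell v = v^* (A'_\ell)^* A'_\ell v = \|A'_\ell v\|^2$. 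Since the ranks of $A_\ell$ and $A'_\ell$ coincide (both equal the rank of the common Gram matrix), the orthogonal complements of their column spans in $V_\ell$ have equal dimension, so this partial isometry extends to a full unitary $U_\ell \in U(N_\ell)$. Setting $h = (U_1, \ldots , U_L) \in H$ then gives $f' = h \cdot f$.

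The conceptual heart of the argument is the first step, where Schur's Lemma identifies exactly which information the second moment records. The only ``obstacle'' is really bookkeeping: correctly unwinding~\eqref{eq.general_second_moment} in terms of Gram matrices, and taking care when extending a partial isometry to a full unitary in case the columns of $A_\ell$ do not span $V_\ell$. Both steps are routine once the identification of $m_f^2$ with the tuple of Gram matrices $(A_\ell^* A_\ell)_{\ell=1}^L$ is in hand.
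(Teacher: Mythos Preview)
Your proposal is correct and follows essentially the same route as the paper: identify $m_f^2$ with the tuple of Gram matrices $(A_\ell^* A_\ell)_\ell$ via Schur's Lemma and formula~\eqref{eq.general_second_moment}, then observe that left-multiplication by unitaries exactly parametrizes matrices with a given Gram matrix. The paper's own proof states the ``if'' direction and the reduction to Gram matrices explicitly but leaves the ``only if'' direction as an implicit appeal to this standard linear-algebra fact; your partial-isometry-extension argument supplies that step in full, so your write-up is if anything more complete.
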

}
\begin{proof}
	If we decompose a vector $f \in V$  
	as in~\eqref{eq.decomp},
	then the second moment \eqref{eq.general_second_moment} determines the inner
	products  
	$\langle f_\ell[i],f_\ell[j]\rangle $ 
	for all $\ell = 1, \ldots , L$ and $i,j \in {1,\ldots R_\ell}$.
	
	For a general representation, an element of $V$ can be represented by an $L$-tuple
	$(A_1, \ldots , A_L)$, where $A_\ell$ is an 
	$N_\ell \times R_\ell$ complex matrix corresponding to an element in the summand $V_\ell^{\oplus R_\ell}$. 
	The second moment determines the $L$-tuple of $R_\ell \times R_\ell$
	Hermitian matrices $(A_1^* A_1, \ldots , A_L^* A_L)$.
\rev{        Thus, if $U_1, \ldots , U_L$ are unitary matrices, then an $L$-tuple
        of $N_\ell \times R_\ell$-matrices
        $(U_1A_1, \ldots , U_LA_L)$ has the same second moment
        because $(U_\ell A_\ell)^*(U_\ell A_\ell)= A_\ell^* A_\ell$ for each
        $\ell$.} In particular, a vector $f$ is determined from the second moment up to the action
	of the product of unitary groups $\prod_{\ell =1}^L U(N_\ell)$. 
\end{proof}

\rev{
\begin{remark}[Parameter counting]
	Since each unitary matrix is determined by $N_\ell^2$ real parameters, the ambiguity group is of dimension $N_H=\sum_{\ell =1}^L N_\ell^2$. 
	If the ambiguity group is   isomorphic to the real orthogonal groups, as in cryo-EM, then the ambiguity group is of dimension $N_H=\sum_{\ell =1}^L N_\ell(N_\ell-1)/2$.
\end{remark}

\begin{remark} \label{rem:Rl} Note that the
	total dimension of the ambiguity group of the second moment does not 
	depend on the multiplicities $R_\ell$. In particular, the ratio of the  dimensions is  
	\begin{equation*}
		\frac{N_H}{N} = \frac{\sum_{\ell = 1}^L N_\ell^2}{2\sum_{\ell =1}^L R_\ell N_\ell}.
	\end{equation*} 
	This implies that  as the number of  multiplicities increases, the proportional amount of information
	about the signal contained in the second moment increases as well.
\end{remark}
}

\subsection{Examples} \label{sec:ambiguities_examples}
\subsubsection{The power spectrum} \label{ex.powerspectrum}
Consider the group $G = \Z_N$ acting on $\K^N$ by cyclic shifts,  where
$\K= \R$ or $\K = \C$. 
In the Fourier
domain, the cyclic group $G = \Z_N$ acts by multiplication by roots
of unity. In particular, we identify $\Z_N = \mu_N$, where $\mu_N$ is the $N$-th roots of unity. If $\omega \in \mu_N$, then 
\begin{equation}\label{eq:fourier_action}
	\omega \cdot (f[0], \ldots , f[N-1]) = (f[0], \omega f[1], \ldots , \omega^{N-1} f[N-1]).
\end{equation}
The vector space $\C^N$ with this action of $\mu_N$ decomposes
as a sum of one-dimensional irreducible representations (namely, $N_\ell=\rev{R_\ell}=1$ for all $k$ so that $N=L$) $V_0 \oplus \ldots \oplus V_{N-1}$, where $\omega \in \mu_N$ acts on $V_i$ by $\omega \cdot  f[i] =
\omega^i f[i]$.
The second moment of a vector $f \in \C^N$ in the Fourier domain is the power
spectrum $(f|[0]|^2, \ldots , |f[N-1]|^2)$. This determines
the vector up to the action of the group $(S^1)^{N}$ since $U(1) = S^1$.
\rev{Figure~\ref{fig:pr} shows an example of two different images with the same power spectrum.}

\begin{figure}[h!]
	\centering
	\begin{subfigure}[h]{0.45\textwidth}
		\centering
		\includegraphics[scale=0.4]{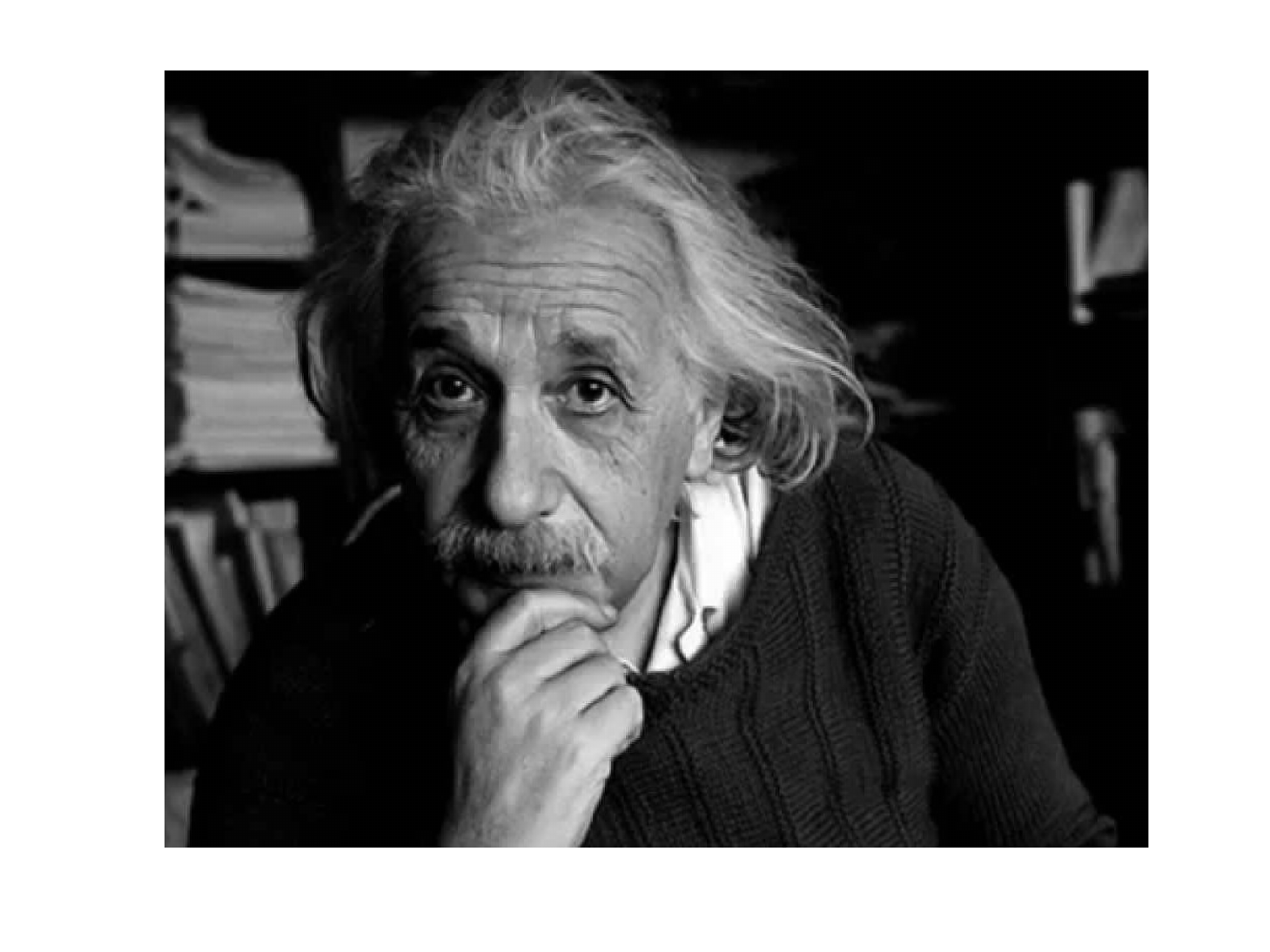}
		\caption{\texttt{Einstein}}
	\end{subfigure} \hfill 
	\begin{subfigure}[h]{0.45\textwidth}
		\centering
		\includegraphics[scale=0.4]{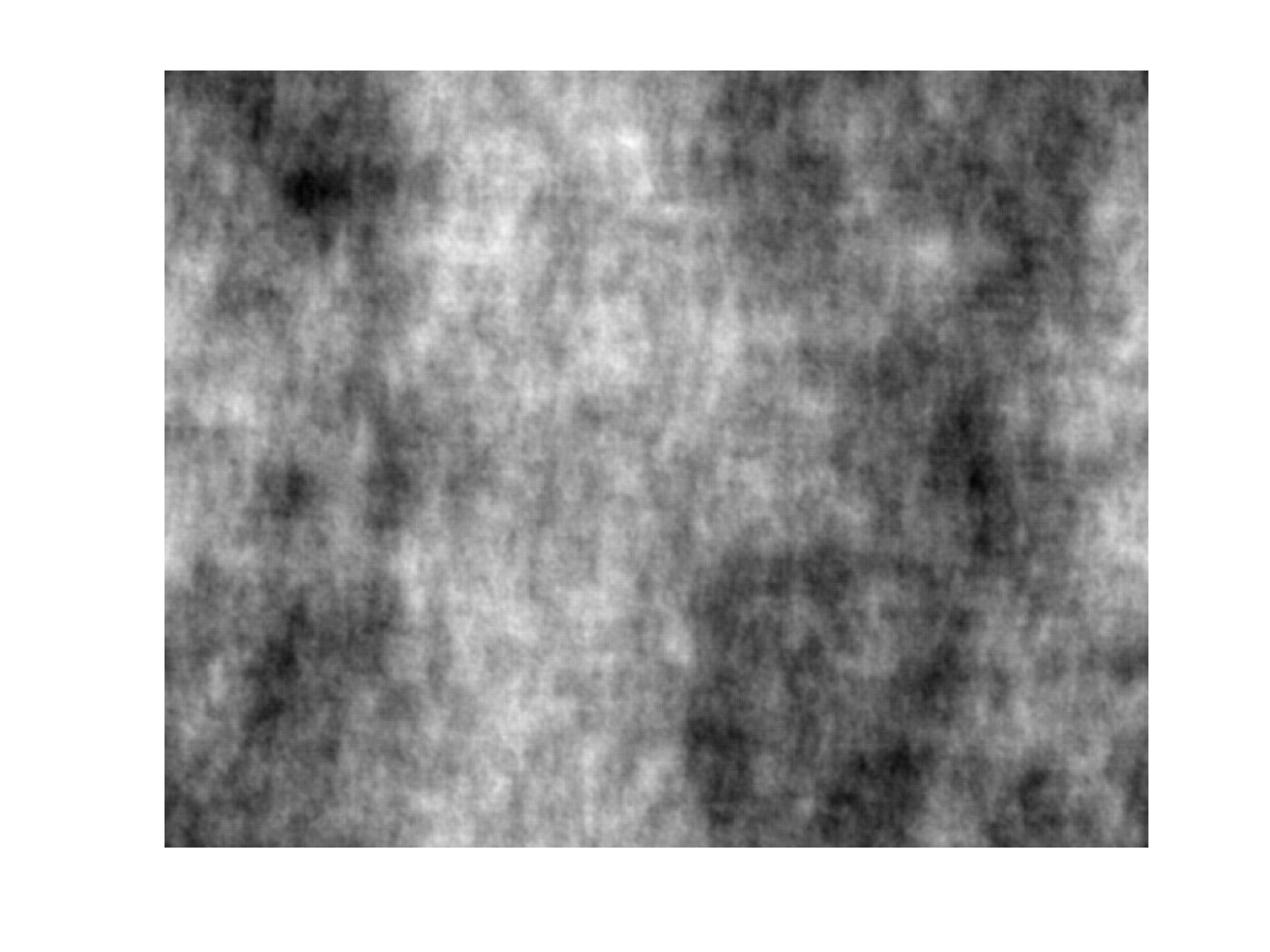}
		\caption{\texttt{Einstein with random phases}}
	\end{subfigure} 
	\caption{\label{fig:pr} \rev{The left panel shows an image of Albert Einstein. To generate the image of the right panel, we combined the absolute values of the Fourier transform of Einstein's image with random phases  and computed the inverse Fourier transform. This example underscores that two images with the same power spectrum may be very different. More generally, two signals which are equal up to a set of unitary matrices (e.g., have the same second moment) may differ significantly.}
	} 
\end{figure}

Recall that the image
of $\R^N$ under the discrete Fourier transform is the real subspace
of~$\C^N$ given by the condition  $f[N-i] = \overline{f[i]}$.
Thus, if $f$ is the Fourier transform of a real vector,  the ambiguity group
must preserves the condition that $f[N-i] = \overline{f[i]}$ and is therefore the subgroup of
\begin{equation} \label{eq:fourier_real}
	\{(\lambda_0, \ldots, \lambda_{N-1} | \lambda_{N-n} = \lambda_{n}^{-1}\} \subset (S^1)^N.
\end{equation}

Recovering a signal from its power spectrum is called the phase retrieval problem~\cite{shechtman2015phase,bendory2017fourier,grohs2020phase,bendory2022algebraic}; see Section~\ref{sec:phase_retrieval} for further discussion.

\subsubsection{Dihedral MRA}
Consider the action of the dihedral group $G= D_{2N}$ acting $\K^N$, where
the rotation $r \in D_{2N}$ acts by cyclic shift and the reflection
$s \in D_{2N}$ acts by $(s \cdot f)[i] = f[N-i]$.
In the Fourier
domain, $(s \cdot f)[i] = \rev{f[N-i]}$ and  
$(r \cdot f)[i] = \omega^i f[i]$  as in~\eqref{eq:fourier_action}.
In~\cite{bendory2022dihedral}, it was shown that the orbit of a generic signal is determined uniquely from the second moment if the group elements are drawn from  a non-uniform distribution over the dihedral group. 
Here, we consider a uniform (Haar) distribution of the group elements.

The vector space $\C^N$ with this action of $D_{2N}$ decomposes into a sum of one and two-dimensional irreducible representations,  depending on the parity of $N$ \rev{(with multiplicity $R_\ell=1$)}. If $N$ is even, then 
$$\C^N = V_0 \oplus V_{1} \oplus \ldots V_{N/2-1} \oplus V_{N/2},$$
where $V_0$ is the one dimensional subspace spanned by the vector
$e_0 = (1,0 \ldots , 0)$, $V_{N/2}$ is spanned by the vector $e_{N/2}$ \rev{($N_0=N_{N/2}=1$)}, 
and for $1 \leq \rev{\ell}\leq N/2 -1$, $V_{\rev{\ell}}$ is the subspace spanned by $\{e_{\rev{\ell}}, e_{\rev{N-\ell}}\}$ \rev{($N_\ell=2$)}.
Similarly, if $N$ is odd, then
$$ \C^N = V_0 \oplus V_1 \ldots \oplus V_{(N-1)/2},$$
where again $V_0$ is spanned by $e_0$ and for $\rev{\ell} \geq 1$
$V_{\rev{\ell}},$ is spanned by $\{e_{\rev{\ell}},
e_{N-\rev{\ell}}\}$.

Therefore, the second moment of a vector $f$  in the Fourier domain
determines the $N/2 +1$ real numbers
$$(|f[0]|^2, |f[1]|^2 + |f[N-1]|^2, \ldots , |f[N/2 -1]|^2 + |f[N/2 +1]|^2, |f[N/2]|^2)$$
if $N$ is even, and the \rev{$(N+1)/2$} real numbers
$$(|f[0]|^2, |f[1]|^2 + |f[N-1]|^2, \ldots , |f[(N-1)/2]^2 + |f[(N+1)/2]|^2)$$
if $N$ is odd.
When $\K = \C$, this is less information than the power spectrum.
When $N$ is even, the ambiguity group is
$S^1 \times U(2)^{N/2} \times S^1$ and when $N$ is odd the ambiguity group
is $S^1 \times U(2)^{(N-1)/2}$. 
However,
if $\K = \R$ then the second moment gives the power spectrum because
if $f$ is the Fourier transform of a real vector then we have $|f[i]| = |f[N-i]|$.
In this case, the ambiguity group is $\pm{1} \times O(2)^{N/2} \times \pm{1}$
if $N$ is even and if $N$ is odd then it is $\pm{1} \times O(2)^{(N-1)/2}$.
These groups are isomorphic to the subgroups of $(S^1)^N$ considered in~\eqref{eq:fourier_real}.

\subsubsection{MRA with rotated images} \label{ex.rotatedimage}
In this model the Fourier transform of an image  is represented as a radially discretized band limited function on $\C^2$.
That is, our function $f$ is expressed as
$f = (f[1], \ldots , f[R])$, where 
\begin{equation} \label{eq.fouriershell}
	f[r](\theta) = \sum_{k=-L'}^{L'} a_{k,r}e^{\I\theta k}, \quad \theta\in[0,2\pi),
\end{equation}
for some bandlimit $\rev{L'=(L-1)/2}$ and $R$ radial samples. 
The action of a rotation $S^1$ on the image is given by 
\begin{align*}
	e^{\iota \alpha} \cdot f[r](\theta)   =  \sum_{k=-L'}^{L'} a_{k,r} e^{\I(\theta -\alpha) k} 
	=  \sum_{k=-L'}^{L'} a_{k,r} e^{-\I \alpha k} e^{\I \theta k}.
\end{align*}
With this action, the parameter space of two-dimensional images 
is the $S^1$-representation \rev{$V = V^{\oplus R}_{-L'} \oplus \ldots \oplus V^{\oplus R}_{L'}$},  where
$V_{k}$ is the one-dimensional representation of $S^1$, where $e^{\I\alpha} \in S^1$ acts
with weight $-k$. \rev{Namely, $N_k=1$, $R_k=R$ for all $k$ so that $N = LR$.}
The $(r_1,r_2)$ component of the second moment equals 
\begin{align} \label{eq:2dmra}
	m^2_f[r_1,r_2](\theta_1,\theta_2)  &=
	\int_\alpha e^{-\I \alpha} f[r_1](\theta_1)
	\overline{e^{-\I \alpha} f[r_2](\theta_2)} d\alpha \nonumber \\ & 
	\int_\alpha \sum_{k_1=-L'}^{L'} a_{k_1,r_1} e^{\I(\theta_1-\alpha) k_1} \sum_{k_2=-L}^L \overline{a_{k_2,r_2}} e^{-\I(\theta_2-\alpha) k_2}d\alpha \nonumber  \\ & =  \sum_{k=-L'}^{L'}a_{k,r_1} \overline{a_{k,r_2} }e^{\I(\theta_1 - \theta_2) k} \\ & =  \sum_{k=-L'}^{L'}a_{k,r_1} \overline{a_{k,r_2}} e^{\I\Delta \theta k}\nonumber
	\\ & =   m^2_f[r_1,r_2](\Delta\theta),\nonumber
\end{align}
where $\Delta\theta:=\theta_1-\theta_2.$
Following our previous discussion, a function $f \in V$ is determined by
a \rev{$L$}-tuple of  $1\times R$ 
matrices $(A_{-L'}, \ldots, A_{L'})$, where
$A_k = (a_{k,1} , \ldots , a_{\rev{k},R})^T$. 
The second moment computes
the \rev{$L$}-tuple of rank-one $R \times R$ matrices $(A_{-L'}^* A_{-L'}, \ldots , A_{L'}^*A_{L}')$.
Since each irreducible summand in the representation $V$ has dimension one
(namely, $N_\ell=1$ for all $\ell$), the ambiguity group of the second moment for the rotated images problem is \rev{$(S^1)^{L}$}.
If we assume that
the function $f$ is the Fourier transform of a real valued function, then $a_{k,r} = \overline{a_{-k,r}}$
and the ambiguity group is \rev{$O(2)^{L'} \times \pm 1$}.

\subsubsection{Two-dimensional tomography from unknown random projections} \label{sec.2dtomography}

The problem of recovering a two-dimensional image from it tomographic projections is a classical  problem in computerized tomography (CT) imaging~\cite{natterer2001mathematics}.
However, in some cases, the viewing angles are unknown and may be considered random.
Due to the Fourier Slice Theorem, this is equivalent to randomly rotating the image, and then acquiring a single one-dimensional line of its Fourier transform. 
While generally an image cannot be recovered from such random projections (in contrast to the three-dimensional counterpart~\eqref{eq:cryoEM_model}, where recovery is theoretically possible based on the common-lines property~\cite{singer2011three,shkolnisky2012viewing}), it was shown that unique recovery, up to rotation, requires  rather mild conditions~\cite{basu2000uniqueness}. Different algorithms were later developed, see for example~\cite{coifman2008graph,singer2013two,zehni2022adversarial}.  

In this model, we compute the second moment of the Fourier transform of the  image after tomographic projection to
a line. In other words, we compute the integral
\begin{equation*}
	\int_{S^1} T e^{\I \alpha} \cdot f[r_1](\theta_1) T \overline{e^{\I \alpha} \cdot f[r_2] (\theta_2) }d\alpha,
\end{equation*}
where $T$ is the tomographic projection to the line $\theta= 0$ (the two-dimensional counterpart of~\eqref{eq:tomographic_projection}). 
\rev{Because we are computing the second moment after tomographic projection,
  we cannot directly determine the ambiguity group
  from Theorem~\ref{prop.ambiguity}. 
  In this case, the tomographic projection causes us to lose information
  and}
we obtain a function only of $r_1, r_2$ (compare with~\eqref{eq:2dmra})
\begin{equation*}
	m^2_f[r_1,r_2] = \sum_{k=-L'}^{L'} a_{k,r_1} \overline{a_{k,r_2}},
\end{equation*}
\rev{where ${L'=(L-1)/2}$.}
If we view the \rev{$L$}-tuple of $1\times R$ 
matrices $(A_{-L'}, \ldots , A_{L'})$
as a single $\rev{L} \times R$-matrix~$A$, then the projected second
moment determines
the Hermitian matrix $A^* A$. Equivalently, an element of $V$ is determined
by $R$ vectors in \rev{$\C^{L}$} and the projected second moment
determines all pairwise inner products of these vectors. In this
case, \rev{the loss of information caused by the tomographic projection
 means that the ambiguity group is the bigger group} $\rev{U(L)}$
(or \rev{$O(L)$} if the image is the Fourier transform of a real-valued function) compared to $\rev{(S^1)^{L}}$ in the unprojected case (respectively,
\rev{$O(2)^{L'} \times \pm 1$}).

\rev{
\begin{remark} Note that when $G=\SO(3)$, the second moment
is unchanged by the tomographic projection from $\R^3 \to \R^2$.
See Lemma \ref{lem:2ndmoment_cryo-EM}.
\end{remark}
}  

\section{Retrieving the unitary matrix ambiguities for sparse signals} \label{sec:sparsity}

In the previous section, we have shown that it is generally impossible to recover a vector~$f$ in a representation $V$ of a
compact group $G$ from its second moment due to the large group of ambiguities.
To resolve these ambiguities and recover the  signal in either the MRA~\eqref{eq:mra} or cryo-EM~\eqref{eq:cryoEM_model} models, we need a prior on the sought signal.
In this work, we assume that the signal is sparse in some basis. This assumption has been studied and harnessed in the MRA~\cite{bendory2022sparse,ghosh2022sparse} and cryo-EM literature~\cite{bendory2022autocorrelation,vonesch2011fast,jonic2015coarse,kawabata2018gaussian,esteve2022spectral,zehni20203d}. 
In this section, we derive bounds on the sparsity level that allows retrieving the missing unitary matrices, as  a function of the
dimensions and multiplicities of the irreducible representations.
We also provide a couple of examples, and leave more detailed discussions on cryo-EM  and phase retrieval to, respectively, Section~\ref{sec:cryoEM} and Section~\ref{sec:phase_retrieval}.

\subsection{Sparsity conditions} \label{sec:sparsity_conditions}
Let $V$ be an $N$-dimensional
vector space. The notion of sparsity depends on the choice
of an orthonormal basis ${\mathcal V} = \{f_1, \ldots , f_N\}$. 
A vector $f \in V$ 
is $K$-sparse
with respect to this ordered basis if $f$ is a linear combination
of at most $K$ elements of this basis. The set of $K$-sparse vectors
with respect to an ordered basis ${\mathcal V}$ is the union of
$\binom{N}{K}$ linear subspaces $\L_S(\V)$, where $\L_S(\V)$ is the subspace spanned by
the vectors $\{f_i\}_{i \in S}$ 
and $S$ is a $K$-element subset of $[1,N]$.

Let
\begin{equation} \label{eq:V}
	V = \oplus_{\ell=1}^L V_\ell^{R_\ell},
\end{equation}
be a representation of a compact group
$G$, where $\dim V_\ell = N_\ell$ . Let $H = \prod_{\ell=1}^{L} U(N_\ell)$
be the ambiguity group
of the second moment (see Theorem~\ref{prop.ambiguity}).

The main result of this section is the following.

\begin{theorem} \label{thm.mainMRA}
	Let $V$ be a representation as in~\eqref{eq:V}, let $N= \sum_{\ell=1}^L N_\ell R_\ell$
	be its total dimension and let $M = \sum_{\ell=1}^L \min(N_\ell R_\ell, N_\ell^2)$. 
	Then, for a generic choice of orthonormal basis $\V$,  
	a generic $K$-sparse vector $f\in V$ with $K \leq N - M$
	is uniquely determined by its second moment, up to a global phase.
\end{theorem}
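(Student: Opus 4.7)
By Theorem~\ref{prop.ambiguity}, $m^2_f = m^2_{f'}$ if and only if $f' = h \cdot f$ for some $h \in H = \prod_{\ell=1}^L U(N_\ell)$. Unique recovery up to global phase therefore amounts to showing that, for a generic basis $\V$ and a generic $f \in \L_S(\V)$ with $|S| = K \leq N - M$, the orbit $H \cdot f$ meets $\bigcup_{|S'|=K} \L_{S'}(\V)$ only in $S^1 \cdot f$. My plan is to bound $H \cdot f$ from above by the larger complex algebraic orbit under the complexified group $H_\C := \prod_{\ell=1}^L \GL(N_\ell, \C) \supset H$, reducing the question to a dimension count for complex algebraic varieties, which is the natural setting for a ``generic basis'' statement.

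The first step is to compute the generic complex dimension of an $H_\C$-orbit. Writing $f$ as an $L$-tuple $(A_1, \ldots, A_L)$ with $A_\ell \in \Mat(N_\ell, R_\ell)$, the $\GL(N_\ell, \C)$-stabilizer of a generic $A_\ell$ (acting by left multiplication) consists of linear maps fixing the column span of $A_\ell$ pointwise and therefore has complex dimension $N_\ell \max(0, N_\ell - R_\ell)$. Summing yields
\[
\dim_\C(H_\C \cdot f) \;=\; \sum_{\ell=1}^L \bigl(N_\ell^2 - N_\ell \max(0, N_\ell - R_\ell)\bigr) \;=\; \sum_{\ell=1}^L \min(N_\ell^2, N_\ell R_\ell) \;=\; M,
\]
so $H_\C \cdot f$ has complex codimension exactly $N - M$ in $V$, and the $H_\C$-saturation $H_\C \cdot \L_{S'}(\V) \subset V$ has complex dimension at most $M + K$.

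The second step is a generic-transversality argument for each pair $(S, S')$. A vector $f \in \L_S(\V)$ admits a non-trivial ambiguity landing in $\L_{S'}(\V)$ precisely when $f \in \L_S(\V) \cap H_\C \cdot \L_{S'}(\V)$. I would form the incidence variety $\{(\V, f, h) \in \GL(V) \times V \times H_\C : f \in \L_S(\V),\ h \cdot f \in \L_{S'}(\V)\}$ and, using that transversality is a Zariski-open condition on the parameter $\V$, obtain for a generic basis the expected-dimension bound
\[
\dim_\C \bigl(\L_S(\V) \cap H_\C \cdot \L_{S'}(\V)\bigr) \;\leq\; 2K + M - N.
\]
Under the hypothesis $K \leq N - M$, the right-hand side is at most $K$; combined with the fact that for generic $\V$ and $S \neq S'$ the subspace $\L_S(\V)$ is not contained in $H_\C \cdot \L_{S'}(\V)$ (a proper-containment statement that itself follows from a parameter count over $\GL(V)$), this exhibits the intersection as a proper subvariety of $\L_S(\V)$. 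Taking the finite union over all $K$-subsets $S'$, and restricting from $H_\C$ back to $H$ via $H = H_\C \cap U(V)$ and $S^1 = H \cap \C^*$, transfers the emptiness statement to the unitary orbit $H \cdot f$ and yields the desired conclusion.

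The main obstacle is handling the trivial diagonal contribution when $S = S'$. Since $H_\C \cdot f$ always passes through $\C^* \cdot f \subset \L_S(\V)$, transversality fails at $f$ itself regardless of the sparsity level, and the naive dimension count overlooks this $1$-dimensional component. One must therefore separately analyze the non-trivial incidence locus $\{(f, h) \in \L_S(\V) \times (H_\C \setminus \C^*) : h \cdot f \in \L_S(\V)\}$, quotient out the $\C^*$-action, and show by an incidence count over $\GL(V)$ that its image in $\L_S(\V)$ has strictly positive codimension. This careful separation of the trivial and non-trivial parts of the intersection, carried out uniformly across all pairs $(S, S')$, is the technical heart of the proof.
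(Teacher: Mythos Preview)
Your enlargement of $H\cdot f$ to the $H_\C$-orbit is closely related to the paper's approach, which instead uses the \emph{linear span} $\L_f$ of $H\cdot f$: for generic $f$ the Zariski closure $\overline{H_\C \cdot f}$ coincides with $\L_f$, and this is a genuine \emph{linear subspace} of complex dimension $M$ (Proposition~\ref{prop.linearspan}). The paper exploits this linearity to replace your transversality machinery by elementary linear algebra. Rather than saturating $\L_{S'}(\V)$ by $H_\C$ and then intersecting with $\L_S(\V)$, it intersects the fixed $M$-dimensional subspace $\L_f$ directly with each $K$-dimensional coordinate subspace $\L_{S'}(\V)$: since $M+K\leq N$, for a generic basis these meet in $\{0\}$ when $f\notin\L_{S'}(\V)$, and in exactly the line $\C f$ when $S'=S$ (both subspaces contain $f$). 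This disposes of what you flag as the ``main obstacle''---the diagonal case $S=S'$---with no separate analysis of the scalar subgroup and no need to quotient by $\C^*$; the line through $f$ falls out automatically from the dimension count for linear subspaces in general position. Your route through $H_\C\cdot\L_{S'}(\V)$ (a non-linear variety of dimension $\leq M+K$) yields only $\dim(\L_S\cap H_\C\cdot\L_{S'})\leq 2K+M-N\leq K$, which at the boundary $K=N-M$ does not by itself give properness, whereas the linear-span bound gives $\dim(\L_f\cap\L_{S'})\leq\max(0,M+K-N)=0$ directly. The paper also reverses the order of quantifiers: it fixes a non-zero $f$, exhibits one basis in ${\mathcal B}_f$ satisfying the desired intersection conditions (Lemma~\ref{lem.technical} and Proposition~\ref{prop.crucial}), upgrades to generic $f\in\L_{\{1,\ldots,K\}}(\V)$ (Proposition~\ref{prop.onetogeneric}), and finally lets $f$ range over $\P(V)$, using properness of projective space to conclude that the set of good bases is Zariski open.
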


\rev{We note that, as in Remark~\ref{rem:Rl}, the larger the number of irreducible representation copies is, the easier the problem is. In particular, note that if $R_\ell\gg N_\ell$ for all $\ell$, the sparsity bound read $K\leq \sum_{\ell=1}^LR_\ell N_\ell=N$. That is, the sparsity level is proportional to the dimension of the representation. In Theorem~\ref{thm.mainCRYO} we provide an explicit example for the cryo-EM case.}

\rev{
  \begin{remark} \label{rem.generic}
    The set of ordered orthornormal bases of an $N$-dimensional
    vector space $V$ can be identified
    with the real algebraic group $O(N)$ if $V$ is real, and $U(N)$ if
    $N$ is complex. When we say that our result
    holds for a {\em generic basis} we mean that there is a real Zariski
    open subset of $O(N)$ (resp. $U(N)$) parametrizing bases for which the conclusion of Theorem \ref{thm.mainMRA} holds. Since the complement of a Zariski open set has Lesbegue measure zero, this means that given
    an orthonormal basis ${\mathcal V}$ for $V$, then with probability one
    Theorem~\ref{thm.mainMRA} holds for that basis.
  \end{remark}
  
  \begin{remark} \label{rem.test}
    Given a basis ${\mathcal V}= \{f_1, \ldots , f_N\}$ for $V$, we can
    express $f \in V$ as $\sum_{n=1}^N x_i f_i$
and the second moment is a collection of homogeneous
quadratic functions in $x_1, \ldots ,x_N$, which we denote
by $m^2_f(x_1, \ldots , x_N)$. The 
following computational test is a simple generalization
of the test used in \cite[Sections 4.3.3, 4.3.4]{bendory2020toward}
that can be used to decide whether ${\mathcal V}$ satisfies the theorem with a sparsity level of $K$:

If $S \subset [1,N]$ is a subset of size $K$, let
$$I_S = \{(x_1,\ldots , x_N), (y_1, \ldots , y_N)| m^2_f(x_1, \ldots , x_N) =
m^2_f(y_1, \ldots , y_N)\} \subset \L_{S}({\mathcal V}) \times \L_S({\mathcal V}),$$
where $\L_S(\V)$ is the subspace spanned by $\{f_i\}_{i \in S}$.
Likewise, if $S,S'$ are two distinct $K$-element subsets of $[1,N]$, let
$$I_{S,S'} = \{(x_1,\ldots , x_N), (y_1, \ldots , y_N)| m^2_f(x_1, \ldots , x_N) =
m^2_f(y_1, \ldots , y_N)\} \subset \L_{S}({\mathcal V}) \times
\L_{S'}({\mathcal V}).$$
The conclusion of Theorem~\ref{thm.mainMRA} holds if
$I_S$ has dimension exactly $K$ and degree one, and $\dim I_{S,S'} < K$.
For small values of $N$, these conditions can be checked using a computer algebra system, but not in polynomial time \cite[Appendix D]{bendory2020toward}.
\end{remark}
}  

\rev{
\begin{remark}[Frames]
	Recall that a collection ${\mathcal F}$ of vectors
	in a finite-dimensional vector space $V$ is a {\em frame} if the vectors
	span $V$.
	The methods used to prove Theorem~\ref{thm.mainMRA} can also be used to prove
	a corresponding result where orthonormal bases are replaced by arbitrary
	frames.
	The only difference between working with frames instead of bases
	is that definition of a vector being sparse with respect
	to an ordered frame is more subtle. The reason is that for a generic frame
	${\mathcal F}$, 
	any $N$-element subset consists of linearly independent vectors, so any
	$f \in V$ which has zero frame coefficients with respect to $N$ elements
	in ${\mathcal F}$ must necessarily be zero. In particular, if
	we work with frames, then the condition that a vector is $K$-sparse should
	be replaced by the condition that at least $N-K$ of the frame coefficients
	are zero. Otherwise, the statements and proofs remain the same.
\end{remark}
}

\subsubsection{Strategy and remarks on the proof}
The proof of Theorem~\ref{thm.mainMRA} involves a number of steps.
Suppose that ${\mathcal V}$ is a 
generic orthonormal basis and consider the set of vectors
which are $K$-sparse with respect to ${\mathcal V}$. The set of such vectors
form the union of $\binom{N}{K}$ $K$-dimensional linear subspaces
of $V$.
The strategy of the proof is to show that with bounds on $K$ given in
the Theorem~\ref{thm.mainMRA}, the following is true:  if $f$ is a generic $K$-sparse
vector with respect to the orthonormal basis ${\mathcal V}$, the only vectors
in the $H$-orbit of $f$ which are also $K$-sparse are of the form
$e^{\I \alpha} f$. 

\rev{Although the $H$-orbit of $f$ is a real algebraic subvariety of $V$
 containing $\{e^{\I \alpha} f \}$, 
we know of no general algebraic geometry result which can be used to analyze
  when a generic linear subspace of $V$ will intersect the orbit $Hf$
  exactly in $\{e^{\I \alpha}f\}$}.
To prove \rev{our} result, we will actually prove something stronger. Rather than consider the $H$-orbit of a vector $f$, we will consider the linear span of its orbit and
prove that the only $K$-sparse vectors in the linear span of the orbit
of $f$ are scalar multiples of $f$. The advantage of working with the linear span is that we can use techniques from linear algebra to understand when a
linear subspace (the linear span of our orbit) intersects the $\binom{N}{K}$
$K$-dimensional linear subspaces consisting of vectors which are $K$-sparse with respect to the given orthonormal basis ${\mathcal V}$.

The price we pay for working with the linear span of an orbit instead of
directly working with the orbit is that if the $H$ orbit
of $f$ has real dimension $M$, then  its linear span is a complex
linear subspace of complex dimension $M$ or equivalently real dimension $2M$
(see Proposition \ref{prop.linearspan}).
As a result, the sparseness bound we obtain may not be optimal. However, when $\dim H << \dim V$, as is the case in cryo-EM, this gap is not significant.

Finally, we remark that for the general MRA problem with group $G$~\eqref{eq:mra}, 
we can at best recover the $G$-orbit of a vector $f$ from its moments. However,
by imposing the prior condition that the vector is sparse with respect to a given
basis we have the possibility of recovering a vector up to a global phase. The reason is that for a general orthonormal basis ${\mathcal V}$ of $V$, the sparse vectors are not invariant under the action of $G$.

\subsection{Proof of Theorem \ref{thm.mainMRA} \label{sec:proof_main_thm}}
Let $H$ be a group acting on a vector space $V$ and $f \in V$
any vector.
We denote by  $\Lf$
the {\em linear span} of the $H$-orbit $Hf$. By definition $\Lf = \left\{ \sum a_i (h_i \cdot f)| a_i \in \C, h_i \in H\right\}$ and it is the smallest linear subspace
containing the orbit $Hf$.

Let $V = \oplus_{\ell = 1}^L V_\ell^{R_\ell}$ be a unitary
representation of a
compact group $G$ and let 
$H= \prod_{\ell =1}^{L} U(N_\ell)$.
Given a vector $f \in V$, we can write $f = \sum_{\ell = 1}^L \sum_{r = 1}^{R_\ell}f_\ell[r]$, 
where $f_\ell[r]$ is in the $r$-th copy of the irreducible representation $V_\ell$.
As above,  we can view our vector $f$ as an $L$-tuple
$(A_1, \ldots , A_{L})$ of 
$N_\ell \times R_\ell$ 
matrices with
$A_\ell=\left (f_\ell[1])^T, \ldots (f_{\ell}[R_\ell])^T\right)$. Viewing
$V_\ell^{R_\ell}$ as the vector space of $N_\ell \times R_\ell$ matrices, 
the linear span $\Lf$ of the orbit $Hf$ is the product
of the \rev{the linear spans of the} $U(N_\ell)$ orbits of the matrices $A_\ell$, where elements of $U(N_\ell)$ acts
on $A_\ell$ by left multiplication.

\begin{proposition} \label{prop.linearspan}
	Let $V= \oplus_\ell^L V_\ell^{R_\ell}$ be a unitary
	representation of a compact group $G$ and
	let $H = \prod_{\ell=1}^L U(N_\ell)$. If $f \in V$ is
	represented by an $L$-tuple $(A_1, \ldots , A_L)$
	of $N_\ell \times R_\ell$ matrices, 
	then $$\dim_\C \Lf = \sum_{\ell = 1}^L (\rank A_\ell) N_\ell,$$
	where   $\dim_\C$ denotes the dimension of $\Lf$ as a complex vector space.
	In particular,
	$$\dim_\C \Lf \leq \sum_{\ell =1}^L M_\ell,$$
	where $M_\ell = \min (N_\ell R_\ell, N_\ell^2)$.
\end{proposition}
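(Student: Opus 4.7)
The plan is to exploit the product structure of the action, reducing the computation to that of a single factor, and then to use the complex span of $U(N_\ell)$ together with the singular value decomposition of $A_\ell$ to determine the dimension of each individual linear span.

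First, I would establish that the linear span factors across the irreducible isotypic components. Since an element $h = (U_1, \ldots , U_L) \in H$ acts on $f = (A_1, \ldots , A_L)$ componentwise as $h \cdot f = (U_1 A_1, \ldots , U_L A_L)$, it is immediate that $\Lf \subseteq W_1 \times \cdots \times W_L$, where
\begin{equation*}
W_\ell := \mathrm{span}_{\C}\{U A_\ell : U \in U(N_\ell)\} \subseteq V_\ell^{\oplus R_\ell}.
\end{equation*}
For the reverse inclusion, I would observe that differences of the form $(U_1,I,\ldots,I)\cdot f - (I,I,\ldots,I)\cdot f$ lie in $\Lf$ and produce vectors supported only in the first factor, namely $(U_1 A_1 - A_1,0,\ldots,0)$. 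As $A_1 \in W_1$, these sweep out all of $W_1 \times \{0\} \times \cdots \times \{0\}$, and symmetrically for each other factor. Hence $\Lf = W_1 \times \cdots \times W_L$, and consequently $\dim_{\C} \Lf = \sum_{\ell=1}^L \dim_{\C} W_\ell$.

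Next, I would compute $\dim_{\C} W_\ell$. The key observation is that the complex linear span of $U(N_\ell)$ inside $\Mat(N_\ell,\C)$ is all of $\Mat(N_\ell,\C)$, since $U(N_\ell)$ is a real form of $GL(N_\ell,\C)$ (equivalently, its tangent space at $I$, the skew-Hermitian matrices, complexifies to $\Mat(N_\ell,\C)$, and the exponential map shows the real linear span equals $\Mat(N_\ell,\C)$). Therefore
\begin{equation*}
W_\ell = \{X A_\ell : X \in \Mat(N_\ell,\C)\},
\end{equation*}
which is the image of the linear map $X \mapsto X A_\ell$. Writing a singular value decomposition $A_\ell = U \Sigma V^*$ with $U \in U(N_\ell)$, $V \in U(R_\ell)$ and $\Sigma$ carrying exactly $r := \rank A_\ell$ nonzero singular values, right-multiplication by $V^*$ and substitution $Y = XU$ identify $W_\ell$ with $\{Y \Sigma : Y \in \Mat(N_\ell,\C)\}$. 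The latter consists of $N_\ell \times R_\ell$ matrices whose first $r$ columns range freely over $\C^{N_\ell}$ and whose remaining columns vanish, so it has complex dimension $r N_\ell = (\rank A_\ell)\, N_\ell$.

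Finally, summing over $\ell$ yields the equality $\dim_{\C} \Lf = \sum_{\ell=1}^L (\rank A_\ell)\, N_\ell$, and the inequality follows from the trivial bound $\rank A_\ell \le \min(N_\ell, R_\ell)$, which gives $(\rank A_\ell)\, N_\ell \le \min(N_\ell R_\ell, N_\ell^2) = M_\ell$. I expect the mildly delicate point to be the reduction step $W_\ell = \Mat(N_\ell,\C) \cdot A_\ell$, namely the fact that complex linear combinations of unitaries already fill all of $\Mat(N_\ell,\C)$; everything else is linear algebra via SVD and the product decomposition of the orbit.
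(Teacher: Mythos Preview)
Your proof is correct and follows essentially the same strategy as the paper: reduce to a single isotypic factor, then use that the complex linear span of $U(N_\ell)$ is all of $\Mat(N_\ell,\C)$ to identify $W_\ell$ with $\Mat(N_\ell,\C)\cdot A_\ell$. The only packaging difference is that you compute $\dim W_\ell$ via an SVD of $A_\ell$, whereas the paper describes $W_\ell$ as the subspace of $N_\ell\times R_\ell$ matrices satisfying the same linear column-relations as $A_\ell$; both descriptions obviously give $(\rank A_\ell)N_\ell$. One small remark: in your product-decomposition step, the sentence ``as $A_1\in W_1$, these sweep out all of $W_1$'' reads as if you need $(A_1,0,\ldots,0)\in\Lf$ separately, but in fact taking $U_1=-I$ already gives $-2A_1$ among your differences, so $A_1$ is in the span of the $(U_1-I)A_1$ without further argument.
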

\begin{proof}
	Since the linear span of the $H$-orbit of $f = (A_1, \ldots , A_L)$ is the product
	of the linear spans of the $U(N_\ell)$-orbits of the matrices $A_\ell$, 
	it suffices to prove that the linear span of the $U(N_\ell)$-orbit
	of the matrix $A_\ell$ in $V_\ell^{R_\ell}$ has dimension
	$(\rank A_\ell) N_\ell$.
	
	Let $r_\ell = \rank A_\ell$ and to simplify notation
	assume that the first $r_\ell$ columns $f_\ell[1]^T, \ldots , f_\ell[r_\ell]^T$
	of $A_\ell$ are linearly independent. Since $\rank A_\ell = r_\ell$, for $r > r_\ell$  there are unique
	scalars $b_{1,r}, \ldots , b_{r_\ell,r}$ such that $f_\ell[r] = \sum_{i=1}^{r_\ell}
	b_{i,r} f_\ell[i]$.
	
	Let $\L_{A_\ell}$ be the $r_\ell N_\ell$-dimensional linear subspace of $V_{\ell}^{R_L}$ consisting
	of $N_\ell \times R_\ell$ matrices $B$ such that for $r>r_\ell$,
	$B_r = \sum_{i=1}^{r_\ell}
	b_{i,r} B_i$, where $B_i$ denotes the $i$-th column of the matrix $B$.
	Since $U(N_\ell)$ acts linearly, the linear relations on the columns of $A_\ell$ are preserved by the action of
	$U(N_\ell)$, so the linear span of $U(N_\ell)A_\ell$ lies in the subspace
	$\L_{A_\ell}$. Conversely, we note that the linear span of $U(N_\ell)A_\ell$ contains
	the open set $\L_{A_\ell}^o$  of $\L_{A_\ell}$, parameterizing matrices whose
	first $r_\ell$ columns are linearly independent. The reason this holds
	is because any invertible $N_\ell \times N_\ell$ matrix is a linear combination
	of unitary matrices and any element of $\L_{A_\ell}^o$ can be obtained by applying
	some invertible matrix to $A_\ell$.
\end{proof}

\rev{
\begin{remark}
	Note that the real dimension of the $U(N_\ell)$-orbit of the
	matrix $A_\ell$ considered in the proof of Proposition \ref{prop.linearspan}
	has real dimension $r_\ell N_\ell$. It follows that for any vector
	$f \in V$,  $\dim_\C \Lf = \dim_\R Hf$. In particular, 
	the real dimension of $\Lf$ is twice the real dimension of the orbit $Hf$.
\end{remark}
}

\rev{ To prove the theorem, we need to show that the 
set $\mathcal{U}$ of orthonormal bases $\V$, such that for every
subset $S \subset [1,N]$ with
$|S| =K$ and with $K \leq  M = \sum_{\ell=1}^L \min(N_\ell R_\ell, N_\ell^2)$
the following statement hold.
\begin{enumerate}
\item[(1)] For generic $f \in \L_S(\V), \L_f \cap \L_S(\V)$ is the line
  spanned by $f$.
\item[(2)] For generic $f \in \L_S(\V)$, if $|S'| = K$ and $S' \neq S$ then $\L_f \cap \L_S(\V) = \{0\}$.
\end{enumerate}

For a fixed subset $S$ with $|S| = K$, let ${\mathcal U}_S$
be the set of orthonormal bases such that (1) and (2) hold for $S$. Then, 
${\mathcal U} = \cap_{S} {\mathcal U}_S$. Since the intersection
of a finite number of Zariski open sets is Zariski open, it suffices
to prove that each ${\mathcal U}_S$ contains a Zariski open set.
Moreover, the proof is identical up to indexing for each subset $S$ so
we will assume, for simplicity of notation, that $S= \{1, \ldots , K\}$.

Given a vector $f \in V$, let ${\mathcal B}_f$ be the
set of orthonormal bases such that $f \in \L_{\{1 , \ldots , K\}}$. Note
  that ${\mathcal B}_f$ is a Zariski closed subset of
  $O(N)$ (resp. $U(N)$) defined by the equation
  $f_1 \wedge \ldots \wedge f_K \wedge f = 0$, where
  $f_1, \ldots , f_K$ are the first $K$ vectors of an ordered basis.
}

\begin{proposition} \label{prop.crucial}
	Let $f\in V$ be any non-zero vector and let $\Lf$ be the linear span
	of its orbit under $H$. 
        \rev{Let $M = \sum_{\ell=1}^L \min(N_\ell R_\ell, N_\ell^2)$.}
	Then, if $K \leq  N-M$, \rev{for the generic orthonormal basis
          $\V \in {\mathcal B}_f$:}
	\begin{enumerate}
		\item[(1)] $\Lf$ intersects $\L_{\{1, \ldots , K\}}(\V)$ in the line spanned by $f$; 
		\item[(2)] $\L_f \cap \L_S(\V) = \{0\}$ if $S \neq \{1, \ldots , K\}$.
	\end{enumerate}
\end{proposition}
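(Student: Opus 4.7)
The plan is to deduce both claims from the dimension bound $\dim_\C \Lf \leq M$ established in Proposition~\ref{prop.linearspan}, combined with the hypothesis $K + M \leq N$. This inequality drives the entire argument: a generic $K$-dimensional subspace should meet any fixed subspace of complex dimension at most $M$ only at the origin. Since $\V \in \mathcal{B}_f$ imposes a constraint only on the specific $K$-plane $\L_{\{1,\ldots,K\}}(\V)$ (namely that it contain $f$), the proof splits naturally into claim~(1), handling that one constrained plane, and claim~(2), handling all other $K$-element subsets $S$ for which no such constraint is present.

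For claim~(1), note that $f \in \Lf \cap \L_{\{1,\ldots,K\}}(\V)$ is automatic, since $\mathrm{id} \in H$ gives $f \in \Lf$ and $\V \in \mathcal{B}_f$ forces $f \in \L_{\{1,\ldots,K\}}(\V)$. I would then pass to the quotient $V/\C f$ of complex dimension $N-1$, in which the images of $\Lf$ and $\L_{\{1,\ldots,K\}}(\V)$ have dimensions at most $M-1$ and exactly $K-1$ respectively; since
\[
(M-1)+(K-1) = K+M-2 \leq N-2 < \dim_\C(V/\C f),
\]
standard Grassmannian transversality implies that the two induced subspaces meet only at zero for $\V$ in a Zariski-open subset of $\mathcal{B}_f$, which is equivalent to $\Lf \cap \L_{\{1,\ldots,K\}}(\V) = \C f$. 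For claim~(2), fix $S \neq \{1,\ldots,K\}$ of size $K$: for generic $\V \in \mathcal{B}_f$ the coefficients of $f$ in the basis vectors $f_1,\ldots,f_K$ are all nonzero, so $f \notin \L_S(\V)$ for any such $S$. Moreover the $K$-plane $\L_S(\V)$ is otherwise unconstrained, and the bound $K+M \leq N$ applied directly in $V$ yields $\Lf \cap \L_S(\V) = \{0\}$ generically. Since only finitely many subsets $S$ arise, the intersection of the corresponding open conditions remains Zariski-open.

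The remaining step is to exhibit a single witness basis $\V^\circ \in \mathcal{B}_f$ realizing all these open conditions simultaneously, confirming that the relevant Zariski-open sets are nonempty rather than vacuous. A natural construction takes a generic $K$-plane $W \ni f$, a generic orthonormal basis $f_1,\ldots,f_K$ of $W$ in which $f$ has all $K$ coordinates nonzero, and a generic orthonormal extension $f_{K+1},\ldots,f_N$ to the orthogonal complement $W^\perp$. The main technical obstacle is to transfer the generic-position statements from the complex Grassmannian back to the real algebraic parameter space $\mathcal{B}_f \subset U(N)$ (or $O(N)$) via the maps $\V \mapsto \L_S(\V)$, and to verify that pullbacks of Zariski-open, nonempty subsets of the relevant Grassmannians (or of the Schubert subvariety parametrizing $K$-planes containing $f$) remain Zariski-open and nonempty in $\mathcal{B}_f$.
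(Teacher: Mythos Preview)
Your proposal is correct in outline and shares the same skeleton as the paper's proof: both argue that the set of good bases in $\mathcal{B}_f$ is Zariski open (because the failure loci are cut out by polynomial equations), and both then reduce to exhibiting a single witness basis. The dimension count you use, together with the quotient by $\C f$ for claim~(1), is exactly the right engine.

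Where your route diverges is in how the witness is produced. You keep $\L_f$ fixed and vary the basis, which forces you to analyze, for each subset $S$, the range of the map $\V \mapsto \L_S(\V)$ from $\mathcal{B}_f$ into the Grassmannian; as you note, this map is constrained in nontrivial ways (e.g.\ if $S \cap \{1,\ldots,K\} = \emptyset$ then $\L_S(\V) \subset f^\perp$), and checking that each such image is rich enough for transversality is the ``technical obstacle'' you flag. The paper sidesteps this entirely by a duality trick: it fixes the standard basis $\{e_i\}$ and instead varies an $M$-plane $\L$ through a vector $w$ with $|w|=|f|$ inside $\Gr_w(M,V)$ (Lemma~\ref{lem.technical}), where the genericity conditions for \emph{all} subsets $S$ are immediate. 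It then chooses a unitary $g$ with $g\cdot(\L,w)=(\L_f,f)$ and takes $v_i = g\cdot e_i$ as the witness basis; since $g$ carries each $\L_S(\{e_i\})$ to $\L_S(\{v_i\})$ and $\L$ to $\L_f$, the intersection conditions transport verbatim. This swap---moving the $M$-plane instead of the basis---is what makes the paper's argument cleaner and eliminates the pullback issue you identified.
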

\rev{The set of orthonormal bases ${\mathcal V} \in {\mathcal B}_f$ for which
  conditions (1) and (2) of Proposition~\ref{prop.crucial}
  are not satisfied is defined by polynomial
  equations. This means that the set of bases satisfying (1) and (2)
  is Zariski open, and to prove Proposition~\ref{prop.crucial}
  we just need to show that this set is non-empty; i.e., we just need
  to show that there exists a basis ${\mathcal V} \in {\mathcal B}_f$
  which satisfies conditions (1) and (2).
To do this we need to introduce some notation and prove a lemma.
}

Fix an orthonormal basis $e_1, \ldots , e_N$ of a Hermitian vector space $V$
of dimension $N$. For $S \subset [1,N]$ with $|S| = K$, let $\L_S = \Span \{e_i\}_{i \in S}$ and $\L_S^*$ be the open subset of $\L_S$ of vectors whose expansion
with respect to the basis $\{e_i\}_{i \in S}$ have all non-zero coordinates.
In other words, $\L_S^* = \L_S \setminus (\bigcup_{S' \neq S} \L_{S'})$.

For a given vector $w \in V$, let $\Gr_w(M,V)$ be the subvariety of the Grassmannian of $M$-dimensional linear subspaces of $V$ that contain $w$.
\begin{lemma} \label{lem.technical}
	If $K \leq N-M$, then for any vector $w \in \L^*_{\{1, \ldots ,K\}}$
	the generic $M$-dimensional linear subspace $\L \in \Gr_w(M,V)$ satisfies
	the following conditions:
	\begin{enumerate}
		\item $\L \cap \L_{\{1, \ldots , K\}}$ is the line spanned by $w$; 
		\item  $\L \cap \L_S = \{0\}$ for $S \neq \{1, \ldots , K\}$ and $|S| = K$.
	\end{enumerate}
\end{lemma}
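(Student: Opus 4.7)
\medskip
\noindent\textbf{Proof plan for Lemma \ref{lem.technical}.}
The plan is to reduce both statements to a standard transversality/dimension-count argument on a Grassmannian by quotienting out the fixed line $\langle w\rangle$. Let $\pi\colon V\to V/\langle w\rangle$ denote the quotient map, so $V/\langle w\rangle$ has dimension $N-1$. Since every $\L\in\Gr_w(M,V)$ contains $\langle w\rangle$, the assignment $\L\mapsto \pi(\L)$ identifies $\Gr_w(M,V)$ with the ordinary Grassmannian $\Gr(M-1,V/\langle w\rangle)$. Conditions (1) and (2) are both Zariski-open in $\L$, so it suffices to show that a non-empty open set of $\L$ satisfies them.

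First I would verify the key translation: for any subspace $W\subset V$,
\begin{equation*}
\L\cap W \;=\; \L\cap W\cap\langle w\rangle \quad\text{if and only if}\quad \pi(\L)\cap\pi(W)=\{0\}\text{ in }V/\langle w\rangle.
\end{equation*}
One direction is immediate, and the other follows because if $u\in\pi(\L)\cap\pi(W)$ is nonzero, one can lift to $v_1\in\L$, $v_2\in W$ with $v_1-v_2\in\langle w\rangle\subset \L$, giving $v_2\in\L\cap W$ with $\pi(v_2)=u\neq 0$. Applying this with $W=\L_{\{1,\ldots,K\}}$ shows that condition (1) of the lemma is equivalent to $\pi(\L)\cap\pi(\L_{\{1,\ldots,K\}})=\{0\}$, and applying it with $W=\L_S$ for $S\neq \{1,\ldots,K\}$ shows that (2) is equivalent to $\pi(\L)\cap\pi(\L_S)=\{0\}$.

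Next I would compute the relevant dimensions of the targets in $V/\langle w\rangle$. Since $w\in \L_{\{1,\ldots,K\}}$, the image $\pi(\L_{\{1,\ldots,K\}})$ has dimension $K-1$. Because $w\in \L^*_{\{1,\ldots,K\}}$ has all of its coordinates nonzero in the basis $e_1,\ldots,e_N$, the vector $w$ does \emph{not} lie in $\L_S$ for any $S\neq\{1,\ldots,K\}$ with $|S|=K$; hence $\pi|_{\L_S}$ is injective and $\pi(\L_S)$ has dimension $K$. The standard transversality statement on the Grassmannian says that a generic $(M-1)$-dimensional subspace of an $(N-1)$-dimensional space meets a fixed $d$-dimensional subspace trivially precisely when $(M-1)+d\leq N-1$, i.e.\ $d\leq N-M$. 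The hypothesis $K\leq N-M$ gives this for $d=K$ (handling (2)) and a fortiori for $d=K-1$ (handling (1)).

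Finally, since each of the finitely many conditions indexed by $S$ defines a Zariski-open subset of $\Gr(M-1,V/\langle w\rangle)$ that is non-empty by the dimension count, their intersection is non-empty and open. Pulling back under the identification with $\Gr_w(M,V)$ gives the conclusion. The only mild subtlety is the lift-and-subtract argument that makes intersections in $V$ correspond to intersections in $V/\langle w\rangle$; everything else is a straightforward Grassmannian dimension count.
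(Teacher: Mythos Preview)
Your argument is correct. The quotient map $\pi\colon V\to V/\langle w\rangle$ cleanly identifies $\Gr_w(M,V)$ with $\Gr(M-1,V/\langle w\rangle)$, and your translation of both conditions into trivial-intersection statements in the quotient is sound; the only point to state slightly more carefully is that $w\in\L^*_{\{1,\ldots,K\}}$ means all of the \emph{first $K$} coordinates of $w$ are nonzero (the remaining $N-K$ are zero), but this is exactly what forces $\operatorname{supp}(w)=\{1,\ldots,K\}$ and hence $w\notin\L_S$ for any other $K$-element $S$, so your dimension computation $\dim\pi(\L_S)=K$ goes through.

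The paper's own proof takes a different and considerably terser route: it stays in $\Gr_w(M,V)$ and simply observes that, for each $S$, the locus of $\L$ violating the relevant condition is a Schubert-type subvariety cut out by a polynomial equation, hence a \emph{proper} closed subset; the complement of the finite union is then Zariski open and nonempty. What your approach buys is an explicit reason why those closed subsets are proper---namely the transversality inequality $(M-1)+d\le N-1$ with $d\in\{K-1,K\}$, which is exactly where the hypothesis $K\le N-M$ enters. The paper's version is quicker but leaves that numerical check implicit; yours makes the role of the bound transparent and reduces everything to the standard statement on an ordinary Grassmannian.
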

\begin{proof}[Proof of Lemma \ref{lem.technical}]
	The subset of $\Gr_w (M,V)$ parameterizing linear subspaces
	intersecting $\L_{\{1, \ldots , K\}}$ in dimension greater than one is locally
	defined by a polynomial equation and therefore a proper algebraic subset.
	Likewise, for any $S \neq \{1, \ldots , K\}$ the subset of $\Gr_w(M,V)$
	parameterizing linear subspace $\L$ such that $\L \cap \L_S \neq \{0\}$
	is also defined by a polynomial equation, and thus is again a proper algebraic subset.
	In particular, the set of $\L \in \Gr_w(M,V)$ which do not satisfy conditions
	(1) and (2) lie in a proper algebraic subset of $\Gr_w(M,V)$. Therefore, the generic subspace $\L \in \Gr_w(M,V)$ satisfies conditions (1) and (2).
\end{proof}
\begin{proof}[Proof of Proposition \ref{prop.crucial}]
	Choose a fixed orthonormal basis $\{e_1, \ldots , e_N\}$ and
	let $(\L, w)$ be an $M$-dimensional linear subspace and vector satisfying
	the conclusions (1) and (2)
	of Lemma~\ref{lem.technical}. If we choose $w$ so that
	$|w| = |f|$, then we can find a rotation $g \in U(N)$ such that
	$g\cdot (\L,w) = (\L_f, f)$. The orthonormal basis $\{v_i = g \cdot e_i\}_{i=1, \ldots N}$
	satisfies conditions (1) and (2) of the proposition.
\end{proof}

\begin{proposition} \label{prop.onetogeneric}
	Let ${\mathcal V}$ be an ordered orthonormal basis for $V$ and assume
	that there is a non-zero vector
	$f_0 \in L_{\{1, \ldots , K\}}(\V)$ such that $\dim \L_{f_0} \cap
	\L_{\{1, \ldots K\}}(\V) =1 $ and $\L_{f_0} \cap \L_{S}(\V)  = \{0\}$ for $S \neq \{1, \ldots , K\}$. Then, for a generic $f \in \L_{\{1, \ldots , K\}}(\V)$
	the same property holds.
\end{proposition}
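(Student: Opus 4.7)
The strategy is to show that the good set
$$ G = \{f \in W : \L_f \cap W = \C\cdot f \text{ and } \L_f \cap \L_S(\V) = \{0\} \text{ for every } S \neq \{1,\ldots,K\} \text{ with } |S|=K\}, $$
where $W = \L_{\{1,\ldots,K\}}(\V)$, is a Zariski open subset of $W$. Since $W$ is irreducible (being an affine space) and $f_0 \in G$ by hypothesis, $G$ will then be a non-empty Zariski open subset, hence Zariski dense, and a generic $f \in W$ will lie in $G$.

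To establish Zariski openness, I would exploit the explicit algebraic parameterization of $\L_f$ from the proof of Proposition~\ref{prop.linearspan}: writing $f = (A_1, \ldots, A_L)$, we have
$$ \L_f = \{(C_1 A_1, \ldots, C_L A_L) : C_\ell \in \C^{N_\ell \times N_\ell}\}, $$
so $\L_f$ is the image of a linear map $\Psi_f \colon \prod_\ell \C^{N_\ell \times N_\ell} \to V$ whose matrix entries are linear in $f$. For any linear subspace $\L' \subseteq V$ (taking $\L' = W$ or any $\L_S(\V)$), the incidence variety
$$ \mathcal{I}_{\L'} = \{(f, C) \in W \times \prod_\ell \C^{N_\ell \times N_\ell} : \Psi_f(C) \in \L'\} $$
is Zariski closed, being cut out by the vanishing of the $V/\L'$-component of $\Psi_f(C)$ (a system of equations bilinear in $(f, C)$). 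The fiber of the projection $\mathcal{I}_{\L'} \to W$ over $f$ is $\Psi_f^{-1}(\L')$, of dimension $\dim \ker \Psi_f + \dim(\L_f \cap \L')$.

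Restricting to the top rank stratum $W^{\mathrm{top}} \subset W$, where $\dim \ker \Psi_f$ (equivalently $\dim \L_f$) is constant, upper semicontinuity of fiber dimensions for $\mathcal{I}_{\L'} \to W^{\mathrm{top}}$ then implies that $\{f \in W^{\mathrm{top}} : \dim(\L_f \cap \L') \geq d\}$ is Zariski closed in $W^{\mathrm{top}}$ for every $d$. Applied with $(\L', d) = (W, 2)$ and with $(\L', d) = (\L_S(\V), 1)$ for each of the finitely many relevant subsets $S$, this shows that the bad set $B \cap W^{\mathrm{top}}$ is a finite union of Zariski closed subsets, hence closed. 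In the intended application via Proposition~\ref{prop.crucial}, the hypothesis produces $f_0$ with $\dim \L_{f_0} = M$, so $f_0 \in W^{\mathrm{top}}$, making $G \cap W^{\mathrm{top}}$ a non-empty Zariski open subset of the irreducible variety $W^{\mathrm{top}}$. Since $W^{\mathrm{top}}$ is Zariski dense in $W$, this will yield the desired density of $G$ in $W$.

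The main technical obstacle is the non-constancy of $\dim \L_f$ on all of $W$: the rank-dropping loci of the $A_\ell(f)$ form proper closed subsets of $W$ on which $\Psi_f$ fails to have constant rank, so naive fiber-dimension semicontinuity need not yield a Zariski closed bad set over all of $W$. Restricting to the top stratum $W^{\mathrm{top}}$ circumvents this, and suffices precisely because $W^{\mathrm{top}}$ is Zariski open and dense in the irreducible variety $W$.
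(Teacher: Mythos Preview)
Your strategy coincides with the paper's: show that the ``bad'' locus $D\subset W=\L_{\{1,\ldots,K\}}(\V)$ is Zariski closed, so that its complement---nonempty because $f_0\notin D$---is Zariski dense in the irreducible space $W$. The paper's proof is a single sentence asserting that $D$ ``is defined by polynomial equations'' and concluding immediately; you supply the mechanism (the incidence variety $\mathcal{I}_{\L'}$, the linear parametrization $\Psi_f$ of $\L_f$, and upper semicontinuity of the fiber dimension $\dim\Psi_f^{-1}(\L')$) that the paper leaves implicit.

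Your attention to the rank-dropping issue is not merely fastidious---it is a genuine point the paper glosses over. The bad set $D$ is \emph{not} Zariski closed on all of $W$ in general, precisely because $\dim\ker\Psi_f$ jumps on the rank-drop locus. For instance, with $L=1$, $N_1=R_1=2$ (so $N=M=4$) and $K=2$, a generic $2$-plane $W\subset\C^{2\times 2}$ contains rank-one matrices $f_0$ with $\dim\L_{f_0}=2$ satisfying the hypotheses, while every rank-two $f\in W$ has $\L_f=V$ and hence $\dim(\L_f\cap W)=2$; thus $D$ contains the dense open rank-two locus yet omits $f_0$. So the proposition as literally stated actually fails without the extra hypothesis $f_0\in W^{\mathrm{top}}$. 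Your restriction to the top-rank stratum (where $\dim\ker\Psi_f$ is constant and semicontinuity applies cleanly) is the correct repair, and your observation that the $f_0$ produced by Proposition~\ref{prop.crucial} has $\dim\L_{f_0}=M$---hence lies in $W^{\mathrm{top}}$---is exactly what is needed for the application to Theorem~\ref{thm.mainMRA}.
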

\begin{proof}
	Given an orthonormal basis ${\mathcal V}$, the set
	$D$ of  $f \in \L_{\{1, \ldots , K\}}(\V)$
	which satisfy the condition that $\dim \Lf \cap \L_{\{1, \ldots K\}}(\V)  > 1$ or
	$\dim \Lf \cap \L_S(\V) > 0$ for $S \neq \{1, \ldots , K\}$ is defined by polynomial
	equations. By hypothesis, we know that $D \neq \L_{\{1, \ldots , K\}}$ since $f_0 \notin D$ so
	its complement is necessarily Zariski dense.
\end{proof}
\rev{At this point we have proved the following. For a fixed vector
non-zero $f_0 \in V$, there is a Zariski open set
$\mathcal{U}_{f_{0}}\subset {\mathcal B}_{f_0}$ such that for every $\V
\in \mathcal{U}_{f_0}$ the generic vector $f \in \L_{\{1,\ldots ,
K\}}$ satisfies conditions (1) and (2) of Proposition
\ref{prop.crucial}.
}
\rev{To complete the proof, we observe that the set of all bases
is $\bigcup_{f_0 \in \P(V)} \mathcal{B}_{f_0}$
and our desired set of bases
is $\bigcup_{f_0 \in \P(V)} \mathcal{U}_{f_0}$,
where $\P(V)$ is the
projective
space of lines in $V$.
This set is open in $O(N)$ (resp. $U(N)$)
because it is the complement of the projection to the first of the Zariski closed
set $Z = \{(\V, f)| \V \in (\mathcal{B}_{f} \setminus \mathcal{U}_f)\}
\subset O(N) \times \P(V)$
(resp. $U(N) \times \P(V)$) and this projection is proper (meaning
it takes Zariski closed sets to Zariski closed sets) because the projective
space $\P(V)$ is a proper variety.
}
\subsection{Examples} \label{sec:examples_sparsity}

\subsubsection{MRA with rotated images model}
Using Theorem~\ref{thm.mainMRA} we can obtain sparsity bounds for recovering a
generic image from its second moment as in Section \ref{ex.rotatedimage}.

Recall that in this model the Fourier transform of an image is represented
as a radially discretized band-limited function on $\C^2$,
and the function $f$ is determined by \rev{an $L$}-tuple $(A_{-L'}, \ldots , A_L')$
vectors in $\C^R$, where \rev{$L'=(L-1)/2$} is the bandlimit and $R$ is the number of radial
samples. The ambiguity group is $H = (S^1)^{2L+1}$.
In the notation of Theorem~\ref{thm.mainMRA},
we have $M_\ell  = 1$ for $\ell = -L', \ldots, L'$.
In particular, for any vector $f \in V$,
$\dim \Lf \leq \rev{L}$. Hence, by Theorem~\ref{thm.mainMRA} we can conclude
that if
$K \leq \dim V - \rev{L}$, then for a generic orthonormal basis, 
a generic $K$-sparse vector can be recovered from its second moment.
Since $\dim V = R\rev{L}$,
if the number of radial samples $R \geq 2$, then
the sparsity level required for signal recovery is $K\leq \frac{R-1}{R}N$, namely,
linear in $\dim V$.
If only one radial sample is taken ($R=1$), then this problem reduces to the  MRA model on the circle,  
which is equivalent to the Fourier phase retrieval problem~\cite{bendory2017fourier}.

\subsubsection{Sparsity bounds for two-dimensional tomography from
	unknown random projections}
Following the model of Section~\ref{sec.2dtomography}, the
unknown image $f$ is viewed as a $\rev{L}\times R$ matrix~$A$
and the projected second moment determines the matrix $A^*A$.
Thus, the ambiguity group is $U(L)$ (complex images) or \rev{$O(L)$} (real images).
The orbit of a generic signal $f$ has dimension $M$, where
$M = \min ( \dim V, \rev{L}^2)$. Since $\dim V = \rev{L}R$, 
we have $M = \min (R\rev{L}, \rev{L^2})$.
In order to be able to recover
sparse signals, we need to take $R > \rev{L}$; i.e., the number of
radial samples must exceed the number of frequencies. Specifically, 
Theorem~\ref{thm.mainMRA} implies that for a generic ordered orthonormal basis
$\V$ we 
can recover $K$-sparse signals
where $\rev{K = (R-L)L}$. In particular, if
$R > p\rev{L}$ with
$p > 1$, then a generic $K$-sparse signal is uniquely
determined by its second moment if $K \leq {{p-1}\over{p}}N$, 
where $N = \dim V$.

\section{Application to cryo-EM} \label{sec:cryoEM}

This section is devoted to the application of the results of Section~\ref{sec:second_moment} and Section~\ref{sec:sparsity} to single-particle cryo-EM: the main motivation of this work.

Recent technological breakthroughs in cryo-EM have sparked a revolution in structure biology---the field that studies the structure and dynamics of biological molecules---by
recovering an abundance of new molecular structures at near-atomic resolution. 
In particular, cryo-EM allows recovering molecules that were notoriously difficult to
crystallize (e.g., different types of membrane proteins), the sample preparation procedure is significantly simpler (compared to alternative technologies) and preserves the molecules in a near-physiological state, and it allows reconstructing multiple functional states. 

In this section, we describe the mathematical model of cryo-EM in detail, formulate the ambiguities of recovering the three-dimensional structure from the second moment, and then derive the sparsity level that allows resolving these ambiguities based on Theorem~\ref{thm.mainMRA}.

\subsection{Mathematical model} \label{sec:cryoEM_model}
Let $L^2(\R^3)$ be Hilbert space of complex valued $L^2$ functions
on~$\R^3$. The action of $SO(3)$ on $\R^3$ induces a corresponding
action on $L^2(\R^3)$, which we view as an
infinite-dimensional representation of $SO(3)$.
In cryo-EM we are interested in the action of $SO(3)$ on
the subspace of $L^2(\R^3)$ corresponding to the Fourier 
transforms of real valued functions on $\R^3$,  representing  the coulomb  potential of an unknown molecular structure.

Using spherical coordinates $(r, \theta, \phi)$ we consider a finite dimensional
approximation of 
$L^2(\R^3)$ by discretizing $f(r, \theta, \phi)$ 
with $R$ samples $r_1, \ldots , r_{R}$, of the radial coordinates
and bandlimiting the corresponding spherical functions $f(r_i, \theta, \phi)$. 
This is a standard assumption in the cryo-EM literature, see for example~\cite{bandeira2020non}.
Mathematically, this means that we approximate the infinite-dimensional representation $L^2(\R^3)$ with
the finite dimensional representation
$V = (\oplus_{\ell =0}^L V_\ell)^R$, where
$L$ is the bandlimit, and $V_{\ell}$ is the $(2\ell +1)$-dimensional
irreducible representation of $SO(3)$, corresponding to harmonic
polynomials of frequency $\ell$.
An orthonormal basis for $V_\ell$
is the set of spherical harmonic polynomials $\{Y_\ell^m(\theta,
\phi)\}_{m = -\ell}^\ell$. We use the notation $Y_\ell^m[r]$
to consider the corresponding spherical harmonic as a basis vector for
functions on the $r$-th spherical shell. The dimension of this
representation is $R(L^2 + 2L+1)$. 

Viewing an element of $V$ as a radially discretized function on $\R^3$, 
we can view
$f \in V$ as an $R$-tuple
$$f = (f[1], \ldots , f[R]),$$
where
$f[r] \in L^2(S^2)$ is an $L$-bandlimited function.
Each $f[r]$ can be expanded in terms of the basis functions $Y_\ell^m(\theta, \varphi)$ as follows 
\begin{equation} \label{eq.function}
	f[r] = \sum_{ \ell=0}^L\sum_{m=-\ell}^\ell A_{\ell}^m[r]
	Y_{\ell}^m.
\end{equation}
Therefore, the problem of determining a structure reduces to determining the unknown coefficients $A_\ell^m[r]$ in \eqref{eq.function}.

Note that when
$f$ is the Fourier transform of a real valued function,  the coefficients
$A_\ell^m[r]$ are real for even $\ell$ and purely imaginary for odd $\ell$~\cite{bhamre2015orthogonal}.

\subsection{The second moment of the cryo-EM model} \label{sec.2ndmoment}
In this section, we first formulate the second moment of the MRA model~\eqref{eq:mra} for $G=$SO(3) and functions of the form~\eqref{eq.function}. Then, we show that this is equivalent to the second moment of the cryo-EM model (Lemma~\ref{lem:2ndmoment_cryo-EM}) and derive the ambiguity group of this model (Corollary~\ref{cor:2ndmoment_cryo-EM}).

Consider the MRA model with $G=$SO(3) and functions of the form~\eqref{eq.function}.
Using the expansion from the previous section and
the functional representation of the second moment~\eqref{eq.functional_second_moment}, we can write
\begin{equation} \label{eq.cryosecondmoment}
	m^2_f = \sum_{r_1,r_2=1}^{R} \sum_{\ell = 0}^{L} \left(\sum_{m = -\ell}^{\ell}
	A_{\ell}^m[r_1] \overline{A_{\ell}^m[r_2]}\right) \sum_{m'=-\ell}^{\ell}
	Y_{\ell}^{m'}[r_1]
	\overline{Y_{\ell}^{m'}[r_2]},
\end{equation}
where the notation $Y_\ell^m[r]$ denotes the corresponding spherical harmonic
in the $r$-th copy of $V_\ell \subset L^2(S^2)$.
To simplify notation,  set
\begin{equation}
	B_{\ell}[r_1,r_2] = \sum_{m=-\ell}^\ell \rev{A_{\ell}^m}[r_1] \overline{\rev{A_{\ell}^m}[r_2]}.
\end{equation}
This 
can be viewed as an inner product of the coefficient vector
$\left(A_{\ell}^{-\ell}[r_1] , \ldots , A_{\ell}^{\ell}[r_1]\right)$
from the $r_1$-shell and the coefficient vector 
$\left(A_{\ell}^{-\ell}[r_2] , \ldots , A_{\ell}^{\ell}[r_2]\right)$
from the $r_2$ shell.
Let $A_{\ell}\in\C^{(2\ell+1)\times R}$  and $B_\ell\in\C^{R\times R}$ be matrices consisting of the coefficients 
$$A_{\ell} = \left(A_{\ell}^m[r_i]\right)_{m=-\ell, \ldots , \ell , i = 1, \ldots R},$$
and
$$B_{\ell}  = \left(B_\ell[r_i,r_j]\right)_{i,j=1\ldots,R}.$$
Then, the second moment determines the matrices
\begin{equation} \label{eq:Bl}
	B_\ell = A_\ell^T A_\ell, \quad  \ell = 0, \ldots L.
\end{equation}
Remarkably, \rev{unlike the tomographic projection $\R^2 \to \R^1$,}
the tomographic projection operator~\eqref{eq:tomographic_projection} does not affect the second moment for \rev{$\SO(3)$}. Therefore, in the context of the second moment, we can treat cryo-EM as a special case of the MRA model\rev{~\eqref{eq:mra}}, where $G$ is the group of three-dimensional rotations SO(3) and $V$ is a a discretization  of $L^2(\R^3)$ as in~\eqref{eq.function}.
This fact has \rev{been} recognized (implicitly) already by Zvi Kam~\cite{kam1980reconstruction}.
For completeness, we prove the following lemma.
\begin{lemma} \label{lem:2ndmoment_cryo-EM}
	Assume a function of the form~\eqref{eq.function}. Then, the second moment of the cryo-EM model~\eqref{eq:cryoEM_model} is the same as the second moment of the MRA model~\eqref{eq:mra} with $G=$SO(3). Namely, the tomographic projection operator in~\eqref{eq:cryoEM_model} does not affect the second moment.
\end{lemma}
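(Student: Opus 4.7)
The plan is to show that the Haar-averaged cryo-EM second moment carries exactly the same information about $f$ as the MRA second moment, by passing to Fourier space via the Fourier Slice Theorem and exploiting the rotational invariance of the integral over $\SO(3)$.

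First, I would work in Fourier space, where by the Fourier Slice Theorem the tomographic projection $T$ in~\eqref{eq:tomographic_projection} becomes the operation of restriction to the equatorial plane $\{\xi_3 = 0\} \subset \R^3$. For any $\xi, \xi' \in \R^2$ with 3D lifts $\tilde\xi = (\xi,0)$ and $\tilde\xi' = (\xi',0)$, this gives $T(g\cdot f)(\xi) = f(g^{-1}\tilde\xi)$, and hence
\begin{equation*}
m^2_{Tf}(\xi,\xi') \;=\; \int_{\SO(3)} f(g^{-1}\tilde\xi)\,\overline{f(g^{-1}\tilde\xi')}\,dg \;=\; m^2_f(\tilde\xi,\tilde\xi'),
\end{equation*}
so the cryo-EM second moment is literally the restriction of the MRA second moment to the equatorial slice.

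Next, I would observe that $m^2_f$ is $\SO(3)$-invariant as a function on $\R^3 \times \R^3$, by the left-invariance of Haar measure applied to the substitution $g \mapsto hg$, and therefore depends on $(\eta,\eta')$ only through $|\eta|$, $|\eta'|$ and $\langle\eta,\eta'\rangle$. Since every pair $(\eta,\eta') \in \R^3\times\R^3$ can be simultaneously rotated into the equatorial plane, the restriction to $\{\xi_3=0\}\times\{\xi_3=0\}$ loses no information: $m^2_f$ is fully recovered from $m^2_{Tf}$. Combined with the previous step, the two moments determine one another.

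Finally, I would convert this equivalence back to coefficient form. Expanding $f$ in spherical harmonics as in~\eqref{eq.function} and applying the decomposition~\eqref{eq.functional_second_moment}, the MRA second moment is encoded by the Gram matrices $B_\ell = A_\ell^T A_\ell$ of~\eqref{eq:Bl}; by the previous two steps, the cryo-EM second moment encodes exactly these same matrices. The main subtlety, and the only place that requires care, is that $T$ is not $\SO(3)$-equivariant---so $g\cdot(Tf) \neq T(g\cdot f)$ pointwise---but this non-equivariance is absorbed precisely by integrating over the full Haar measure, which is what makes the comparison with the unprojected MRA second moment possible.
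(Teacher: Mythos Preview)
Your argument is correct and is genuinely different from the paper's. You proceed geometrically: the Fourier Slice Theorem turns the cryo-EM second moment into the restriction of the MRA second moment $m^2_f$ to the equatorial slice, and then the $\SO(3)$-invariance of $m^2_f$ (which forces it to depend only on $|\eta|,|\eta'|,\langle\eta,\eta'\rangle$) shows that this restriction already determines all of $m^2_f$, since any pair of vectors in $\R^3$ can be rotated into the equatorial plane. The paper instead computes directly: it commutes $T\times T$ past the Haar integral, expands in spherical harmonics, invokes the addition theorem $\sum_{m=-\ell}^\ell Y_\ell^m(\pi/2,\varphi_1)\overline{Y_\ell^m(\pi/2,\varphi_2)} = P_\ell(\cos(\varphi_1-\varphi_2))$, and then appeals to the orthogonality of the Legendre polynomials to read off the Gram entries $B_\ell[r_1,r_2]$ from the projected moment. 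Your route is cleaner and basis-free, and makes transparent \emph{why} the projection loses nothing (two vectors always lie in some plane, and $\SO(3)$ is transitive on planes through the origin); the paper's route, by contrast, produces the explicit formula~\eqref{eq.crux}, which is useful if one actually wants to compute or estimate the $B_\ell$ from projected data.
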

\begin{proof}
	Consider the projected second moment of a function $f \in V$ for fixed $(r_1,r_2)$:
	\begin{align} \int_{SO(3)} T(g\cdot f[r_1](\theta_1, \phi_2))
		T(\overline{g\cdot f[r_2](\theta_2, \phi_2)}\;dg = & (T \times T)\int_{SO(3)} (g\cdot f[r_1](\theta_1, \phi_1) \overline{(g\cdot f[r_2])(\theta_2, \phi_2})dg\nonumber\\
		& = (T\times T)(m^2_f[r_1,r_2](\theta_1, \phi_1, \theta_2, \phi_2))\\
		& =\sum_{\ell = 0}^{L} B_\ell[r_1,r_2]
		\sum_{m=-\ell}^{\ell} Y_{\ell}^m(\pi/2, \varphi_1)[r_1]
		\overline{Y_{\ell}^m(\pi/2, \varphi_2)[r_2]}\nonumber.
	\end{align}
	Here, $T \times T$ is the product of tomographic projections so
	$(T \times T)f(\theta_1, \phi_1, \theta_2, \phi_2) = f(\pi/2, \phi_1,\pi/2,\phi_2)$. Note that the first equality holds because the linear operator $T$
	commutes with integration over the group $SO(3)$.
	Let $P_\ell$ be the Legendre polynomial of degree $\ell$.
	Since, up to constants~\cite[Section 2.2]{atkinson2012spherical},
	\begin{equation} \label{eq.legendre} \sum_{m=-\ell}^\ell Y^m_\ell(\pi/2, \varphi_1)
		\overline{Y^m_\ell(\pi/2, \varphi)}
		= P_\ell(\cos(\varphi_1 -\varphi_2)),
	\end{equation}
	we have
	\begin{equation} \label{eq.crux}
		\int_G T(g\cdot f[r_1]) \overline{ T(g\cdot f[r_2])}\;dg =		\sum_{\ell = 0}^L B_\ell[r_1,r_2] P_\ell(\cos(\varphi_1 - \varphi_2)).
	\end{equation}
	Since the Legendre polynomials are orthonormal functions of $\varphi = \varphi_1 - \varphi_2$, we can determine the coefficients $B_\ell[r_1, r_2]$ from \eqref{eq.crux}.	
	Thus we can conclude that no information is lost from the taking the projected second moment.
\end{proof}

\begin{corollary} \label{cor:2ndmoment_cryo-EM}
	Assume a  function of the form
	\begin{equation*}
		f[r] = \sum_{ \ell=0}^L\sum_{m=-\ell}^\ell A_{\ell}^m[r]
		Y_{\ell}^m.
	\end{equation*}
	Then, the second moment of the cryo-EM model~\eqref{eq:cryoEM_model} is given by~\eqref{eq:Bl}. Therefore, the second 
	moment
	determines the coefficient matrices~$A_\ell, \, \ell=0,\ldots,L$ up to the action of the ambiguity group
	$\prod_{\ell =0}^{L} U(2\ell +1)$.
	Moreover, if we consider functions $f[r]$ which are the Fourier
	transforms of real-valued functions on $\R^3$ (which is the  scenario in cryo-EM), then the coefficients
	$A_\ell^m[r]$ are real for even $\ell$ and purely imaginary for odd $\ell$~\cite{bhamre2015orthogonal}, 
	and the ambiguity group is $\prod_{\ell =0}^{L} O(2\ell +1)$.
\end{corollary}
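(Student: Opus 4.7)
The plan is to derive the corollary as a direct specialization of Theorem~\ref{prop.ambiguity}, using Lemma~\ref{lem:2ndmoment_cryo-EM} to reduce the cryo-EM model to the MRA model on the finite-dimensional representation $V=(\oplus_{\ell=0}^{L}V_\ell)^{R}$, where $V_\ell$ is the $(2\ell+1)$-dimensional irreducible representation of $\SO(3)$ corresponding to spherical harmonics of frequency~$\ell$. The decomposition of $V$ as a representation of $\SO(3)$ has precisely $L+1$ isotypic components, with $N_\ell = 2\ell+1$ and multiplicity $R_\ell = R$ for each $\ell=0,\ldots,L$.

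First I would verify the formula $B_\ell = A_\ell^T A_\ell$: starting from the functional form \eqref{eq.cryosecondmoment} of the second moment, the coefficients of the orthogonal ``reproducing kernel'' $\sum_{m'} Y_\ell^{m'}[r_1]\overline{Y_\ell^{m'}[r_2]}$ are exactly the entries $B_\ell[r_1,r_2] = \sum_m A_\ell^m[r_1]\overline{A_\ell^m[r_2]}$, which assemble into $A_\ell^{T}A_\ell$ when the $A_\ell^m[r]$ are organized into the matrix $A_\ell\in\C^{(2\ell+1)\times R}$. By Lemma~\ref{lem:2ndmoment_cryo-EM}, the tomographic projection in \eqref{eq:cryoEM_model} does not destroy any of this information, so the second moment of the cryo-EM observations determines, and is determined by, the tuple $(B_0,\ldots,B_L)$.

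Next, to identify the ambiguity group I would invoke Theorem~\ref{prop.ambiguity} directly with this decomposition: two signals $f,f'\in V$ produce the same second moment if and only if they lie in the same orbit of $H=\prod_{\ell=0}^{L}U(N_\ell)=\prod_{\ell=0}^{L}U(2\ell+1)$. Concretely, $U(2\ell+1)$ acts on the matrix $A_\ell$ by left multiplication, and $A_\ell^{T}A_\ell = (UA_\ell)^{T}(UA_\ell)$ holds precisely for $U\in U(2\ell+1)$ (up to the usual transpose-versus-adjoint convention, which here is harmless since the Gram matrices of interest in the complex case recover $A_\ell^* A_\ell$). This gives the first ambiguity statement.

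Finally, for the reality reduction, I would use the well-known symmetry $Y_\ell^{-m}=(-1)^m\overline{Y_\ell^m}$ to show that $f[r]$ is the Fourier transform of a real-valued function exactly when $A_\ell^m[r]$ is real for even $\ell$ and purely imaginary for odd $\ell$; up to the factor $i^\ell$ (which can be absorbed into a rescaling of the basis), each column of $A_\ell$ then lies in a real $(2\ell+1)$-dimensional subspace. An element $U\in U(2\ell+1)$ preserves this real subspace (and the Hermitian form defining $B_\ell$) if and only if $U$ is orthogonal, so the ambiguity group reduces to $\prod_{\ell=0}^{L}O(2\ell+1)$. The main subtlety, and the only place where care is needed, is matching the reality conventions across even and odd $\ell$: verifying that after the unitary change of basis absorbing the $i^\ell$ factors the full constraint subspace is genuinely real, so that the stabilizer of this subspace inside $U(2\ell+1)$ is precisely $O(2\ell+1)$ and not some other real form.
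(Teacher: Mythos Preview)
Your proposal is correct and follows exactly the approach the paper intends: the corollary is stated without a separate proof in the paper because it is an immediate consequence of Lemma~\ref{lem:2ndmoment_cryo-EM} (reducing cryo-EM to MRA on $V=\oplus_{\ell=0}^L V_\ell^R$) together with Theorem~\ref{prop.ambiguity} applied to this decomposition with $N_\ell=2\ell+1$, and the cited reality convention from~\cite{bhamre2015orthogonal}. Your write-up simply makes these two invocations explicit, and your remark on absorbing the $i^\ell$ factor to land in a genuine real subspace is the correct way to justify the $O(2\ell+1)$ reduction that the paper asserts without detail.
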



\subsection{Recovery of sparse structures from the second moment\label{sec:recovery_sparse}}
Based on Theorem~\ref{thm.mainMRA}, we now prove that in cryo-EM a $K$-sparse signal
can be recovered from the second moment when $K \lessapprox  N/3$.

\begin{theorem} \label{thm.mainCRYO}
	Assume a function of the form~\eqref{eq.function}, where the number of shells satisfies $R \geq 2L+1$.
	Let $V = \oplus_{\ell =0}^{L} V_{\ell}^{R}$
	and let $N = \dim V$.
	Then, if 
	\begin{equation*}
		\frac{K}{N} \leq \frac{2/3 L^3 + L^2 + L/3}{2L^3 + 5L^2 + 4L +1}\approx\frac{1}{3},
	\end{equation*}  
	then
	for a generic choice of orthonormal basis ${\mathcal V}$, 
	a generic $K$-sparse function $f \in V$
	is uniquely determined by its second moment, up to a global phase.
\end{theorem}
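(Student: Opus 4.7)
The plan is to specialize Theorem~\ref{thm.mainMRA} to the cryo-EM decomposition from Section~\ref{sec:cryoEM_model} and Corollary~\ref{cor:2ndmoment_cryo-EM}, where $V = \bigoplus_{\ell=0}^L V_\ell^R$ with $\dim V_\ell = 2\ell+1$, multiplicities $R_\ell = R$, and ambiguity group $\prod_{\ell=0}^L U(2\ell+1)$. Substituting into the notation of Theorem~\ref{thm.mainMRA} gives immediately $N = R \sum_{\ell=0}^{L}(2\ell+1) = R(L+1)^2$, and the theorem's bound $K \leq N - M$ reduces the problem to computing
\begin{equation*}
M = \sum_{\ell=0}^L \min\bigl(R(2\ell+1),\,(2\ell+1)^2\bigr).
\end{equation*}

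The crucial observation is that the hypothesis $R \geq 2L+1$ forces $R \geq 2\ell+1$ for every $\ell \leq L$, so in each summand the minimum is attained by $(2\ell+1)^2$ rather than by $R(2\ell+1)$. Hence $M = \sum_{\ell=0}^L (2\ell+1)^2$, and I would invoke the standard sum-of-odd-squares identity to obtain the closed form $M = (L+1)(2L+1)(2L+3)/3$.

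The remaining step is purely arithmetic: substitute the closed forms for $N$ and $M$ and specialize to the extremal case $R = 2L+1$ of the hypothesis, for which the ratio $K/N$ is tightest. A short factorization yields
\begin{equation*}
N - M = (L+1)(2L+1)\Bigl[(L+1) - \tfrac{2L+3}{3}\Bigr] = \tfrac{1}{3} L(L+1)(2L+1),
\end{equation*}
so $K/N \leq L/(3(L+1))$. Multiplying numerator and denominator by $(L+1)(2L+1)/3$ rewrites this exactly as the ratio $(\tfrac{2}{3}L^3 + L^2 + \tfrac{L}{3})/(2L^3 + 5L^2 + 4L + 1)$ appearing in the theorem, which tends to $1/3$ as $L \to \infty$.

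There is really no significant obstacle here: once one correctly identifies $(N_\ell, R_\ell) = (2\ell+1, R)$ from the $SO(3)$ representation theory of Section~\ref{sec:cryoEM_model}, Theorem~\ref{thm.mainMRA} does all the heavy lifting and the remaining work is bookkeeping. The only point that genuinely requires care is the role of $R$: the stated ratio $\approx 1/3$ corresponds to the critical minimal choice $R = 2L+1$, and for $R > 2L+1$ the same argument in fact yields a strictly better admissible sparsity ratio, consistent with Remark~\ref{rem:Rl}.
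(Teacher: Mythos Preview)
Your proposal is correct and follows essentially the same route as the paper: apply Theorem~\ref{thm.mainMRA} with $N_\ell = 2\ell+1$ and $R_\ell = R$, use $R \geq 2L+1$ to reduce each $\min$ to $(2\ell+1)^2$, and compute $N-M$ at the extremal value $R = 2L+1$ to obtain the stated ratio. Your explicit remark that larger $R$ only improves the admissible sparsity ratio (so the fixed bound in the statement remains sufficient for all $R \geq 2L+1$) is a point the paper leaves implicit.
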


\begin{proof}
	The dimension of the representation $V$ is $N=R(L+1)^2$. Thus, if $R \geq 2L +1$, then $N = \dim V \geq 2L^3 + 5L^2 + 4L +1$.
	On the other hand,
	since $R \geq \dim V_\ell$ for all $\ell$, we know by Proposition~\ref{prop.linearspan} that
	for $f \in V$ the linear span $\L_f$ of the orbit of $f$ under the ambiguity
	group $\oplus_{\ell =0}^L O(2\ell +1)$ has dimension at most
	$$\sum_{\ell =0}^L  (2\ell +1)^2 = 4/3 L^3 + 4L^2 + 11L/3 +1 .$$
	Therefore, by Theorem \ref{thm.mainMRA}, if $K \leq 2/3 L^3 + L^2 + L/3$ then for a generic choice of orthonormal basis, 
	a generic $K$-sparse vector $f \in V$ is uniquely determined by
	its second moment. 
	\end{proof}
	
	\begin{corollary}
		Under the conditions of Theorem~\ref{thm.mainCRYO}, a three-dimensional structure of the form~\eqref{eq.function} can be recovered from   $n$ realization from the cryo-EM model when $n=\omega(\sigma^4)$.
	\end{corollary}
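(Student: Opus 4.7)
The plan is to combine the identifiability result of Theorem~\ref{thm.mainCRYO} with a standard method-of-moments consistency argument. There are two ingredients: (i) under the stated sparsity condition, the population second moment determines the structure (up to a global phase) on a Zariski-open set of sparse signals; (ii) the empirical second moment concentrates around the population second moment at the rate $\sigma^2/\sqrt n$, so that $n=\omega(\sigma^4)$ is the precise sample rate needed for consistency in the high-noise regime.

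For the statistical rate, I would form the debiased empirical estimator
\begin{equation*}
  \hat M_n = \frac{1}{n}\sum_{i=1}^n y_i y_i^* - \sigma^2 I,
\end{equation*}
which is unbiased for $M(f) = \int_{\SO(3)} (g\cdot f)(g\cdot f)^*\,dg$ since $\sigma^2$ is assumed known and $\E[\varepsilon_i\varepsilon_i^*]=\sigma^2 I$. Writing $y_i y_i^* - \sigma^2 I$ as the sum of the signal term $(g_i\cdot f)(g_i\cdot f)^*$, two cross terms linear in $\varepsilon_i$, and the centered noise term $\varepsilon_i\varepsilon_i^*-\sigma^2 I$, one sees that the entrywise variance is $O(1)+O(\sigma^2\|f\|^2)+O(\sigma^4)$. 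Averaging over $n$ i.i.d. samples, the entrywise standard deviation of $\hat M_n - M(f)$ is therefore $O(\sigma^2/\sqrt n)$ as $\sigma\to\infty$, which tends to zero precisely when $n=\omega(\sigma^4)$. A union bound over the $O(N^2)$ entries then gives $\hat M_n\to M(f)$ almost surely.

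To deduce recovery of $f$, I would use the method-of-moments estimator that returns any $K$-sparse signal $\hat f_n$ (with respect to the fixed generic basis of Theorem~\ref{thm.mainCRYO}) whose second moment is closest to $\hat M_n$. Since the map $f\mapsto M(f)$ is continuous, polynomial, and, by Theorem~\ref{thm.mainCRYO}, injective on a Zariski-open subset of the variety of $K$-sparse vectors modulo a global phase, a standard compactness-and-continuity argument (restricting to a bounded energy ball, which is legitimate because $\trace M(f)=\|f\|^2$ is recoverable from $\hat M_n$) shows that any sequence of near-minimizers $\hat f_n$ must converge (in orbit) to $f$ as $\hat M_n\to M(f)$. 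Combining with the concentration bound above gives the desired conclusion for $n=\omega(\sigma^4)$.

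The main subtlety — rather than a real obstacle — is invoking Theorem~\ref{thm.mainCRYO} properly: identifiability holds only generically, so one must restrict to the Zariski-open set where the theorem applies, and verify that continuity of the inverse map is enough for consistency (no uniformity in $f$ is claimed). An explicit non-asymptotic error bound would require a quantitative lower bound on how sharply $M$ separates orbits of $K$-sparse signals, which lies beyond the scope of the asymptotic statement being claimed here.
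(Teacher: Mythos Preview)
Your proposal is correct and follows the same logic the paper relies on: the paper states this corollary without proof, treating it as an immediate consequence of Theorem~\ref{thm.mainCRYO} together with the fact (recalled in the introduction, with citations to~\cite{bandeira202optimal,abbe2018estimation,perry2019sample}) that the empirical second moment converges to the population second moment when $n=\omega(\sigma^4)$. Your write-up simply makes explicit the method-of-moments consistency argument that the paper leaves implicit; there is no genuine difference in approach.
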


\rev{	
	\begin{remark}[Near-optimality]
		While the sparsity level of Theorem~\ref{thm.mainCRYO} is not necessarily optimal, it is optimal up to a constant. Thus,  we say that our sparsity bound is near-optimal.
	\end{remark}
	\begin{remark}
		A recent paper~\cite{bendory2022autocorrelation} showed that a three-dimensional structure composed of a finite number of ideal point masses (or its convolution with a fixed kernel with a non-vanishing Fourier transform) can be recovered from the second moment. Theorem~\ref{thm.mainCRYO} is far more general as it includes sparse structures under almost any basis.
		Yet,~\cite{bendory2022autocorrelation} also suggests an algorithm which harnesses sparsity in the wavelet domain, for which our result does not necessarily hold (since Theorem~\ref{thm.mainCRYO} holds for generic bases and we cannot verify that any particular basis satisfies the generic condition).  
	\end{remark}
	
	\begin{remark}[Spherical-Bessel expansion]
		Our analysis assumes a model of multiple shells as in~\eqref{eq.function}. However, a similar analysis can be carried out to related models, such as spherical-Bessel expansion, where the coefficients $A_{\ell}^m[r]$ are expanded by 
		\begin{equation*}
			A_{\ell}^m[r] = \sum_{s=1}^{S_\ell} \tilde{A}_{\ell}^m[s] j_{\ell,s}[r],
		\end{equation*}
		where the $j_{\ell,s}[r]$ are the normalized spherical Bessel functions. 
		The ``bandlimit'' $S_\ell$ is determined by a sampling criterion, akin to Nyquist sampling criterion~\cite{bhamre2017anisotropic}.
		This expansion has been useful in various cryo-EM tasks, see for example~\cite{levin20183d,bendory2018toward,bendory2022sparse}.
		Our analysis can be applied to molecular structures represented using the spherical-Bessel expansion, where the only difference is the way we count the dimension of the representation. 
	\end{remark}
}
	
	\section{Crystallographic phase retrieval} \label{sec:phase_retrieval}
	
	The crystallographic phase retrieval problem is the problem of recovering a signal in $\R^N$ or $\C^N$ from its power spectrum. As seen from Section~\ref{ex.powerspectrum}, this is equivalent to recovering a signal from its second moment for the action of either the cyclic group $\Z_N$ or the dihedral group.  
	However, because each irreducible representation
	appears exactly once, 
	\rev{Theorem~\ref{thm.mainMRA} provides an uninformative bound of $K\leq 0$.} 
	
	In~\cite{bendory2020toward}, the authors conjectured that when
	$\R^N$ is given by the standard basis,   a generic $K$-sparse
	vector in $\R^N$ can be recovered, up to unavoidable ambiguities, from its power
	spectrum if $K \leq N/2$ \rev{if the support is not an arithmetic progression}. This conjecture was proved for a few specific cases but a complete  proof of this conjecture is  beyond current techniques.
	In~\cite{ghosh2022sparse}, it was shown that for large enough $N$,  $K$-sparse, symmetric signals are determined uniquely from their power spectrum for \rev{$K=O(N/\log^5N)$}.

\rev{On the other hand, for generic bases, the following provable
  optimal bound for phase retrieval was recently obtained
  \cite{edidin2023generic} using the techniques of this paper. Unlike
  the conjectures of~\cite{bendory2020toward}, this result makes no
  assumption on the support of the signal with respect to the given
  basis.
\begin{theorem} \cite[Theorem 1.1]{edidin2023generic} \label{thm.phaseretrieval}
Let ${\mathcal V}$ be a generic basis for $\R^N$. Then, if 
$K \leq N/2$, a generic $K$-sparse vector can be recovered
from its power spectrum, up to a global phase.
\end{theorem}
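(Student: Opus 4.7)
The plan is to mimic the strategy of Section~\ref{sec:proof_main_thm} while bypassing the obstruction that makes Theorem~\ref{thm.mainMRA} vacuous for phase retrieval. Recall that for $\Z_N$ acting on $\R^N$, the decomposition into real irreducibles consists (up to the parity of $N$) of one or two trivial summands together with roughly $N/2$ copies of the two-dimensional rotation representation, each with multiplicity one. On each two-dimensional piece the ambiguity group acts as $SO(2)$, and the linear span of a generic orbit already fills that 2-plane; hence the linear span $\L_f$ of the full ambiguity orbit is all of $\R^N$, and Proposition~\ref{prop.linearspan} yields only $K \leq 0$. So instead of replacing the orbit by its span, I would work directly with the orbit, which as a real algebraic variety is a torus of real dimension $\lfloor N/2 \rfloor$ in $\R^N$. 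Since $\dim_\R \L_S({\mathcal V}) = K$ and the ambient real dimension is $N$, a transversal intersection of the orbit and $\L_S({\mathcal V})$ has expected dimension $K + \lfloor N/2 \rfloor - N$, which is non-positive precisely when $K \leq N/2$, matching the claimed bound.

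The detailed skeleton would run parallel to Lemma~\ref{lem.technical}, Proposition~\ref{prop.crucial}, and Proposition~\ref{prop.onetogeneric}, with the variety $T_f = \{g \in \R^N : |\hat g[k]| = |\hat f[k]| \text{ for all } k\}$ playing the role of $\L_f$. First I would fix a standard orthonormal basis $\{e_1,\ldots,e_N\}$ of $\R^N$ and, for any $w \in \L^*_{\{1,\ldots,K\}}$, show that the generic subvariety $T \in \Gr_w^{\mathrm{orb}}$ passing through $w$ and of the form $T = T_{w'}$ for some $w'$ (i) meets $\L_{\{1,\ldots,K\}}$ only in $\{\pm w\}$, and (ii) misses every other $\L_{S'}$ with $|S'|=K$. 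This is an algebraic-geometric statement: the loci in the parameter space of orbits for which the intersection is larger than expected are cut out by polynomial equations, so it suffices to exhibit one orbit/subspace pair where the transversality holds. Second, using the $O(N)$-action on bases, I would transport this single pair to a rotation of the standard basis, promote it to a generic basis as in Proposition~\ref{prop.onetogeneric} via the usual Zariski-openness argument, and conclude by sweeping over $w_0 \in \P(\R^N)$ and applying properness of $\P(\R^N)$ exactly as at the end of Section~\ref{sec:proof_main_thm}. The trivial ambiguities inherent in the power spectrum (shifts, reflections) are killed generically because they do not preserve $\L_S({\mathcal V})$ for a generic basis, leaving only the global sign $\pm 1$.

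The principal obstacle is upgrading the dimension count from ``the intersection $T_f \cap \L_S({\mathcal V})$ is zero-dimensional'' to ``the intersection is exactly $\{\pm f\}$.'' A naive transversality argument bounds only the dimension of the intersection; one must further show that no spurious isolated points appear, which seems to require a Bezout-type degree bound for the real torus $T_f$ restricted to the sparse subspace, or an explicit algebraic identification of the branches of $T_f$ through $\pm f$. A secondary difficulty is that $T_f$ is a real, not complex, algebraic variety, so the clean Zariski-openness arguments used in Section~\ref{sec:proof_main_thm} must be adapted; in particular the existence step of Lemma~\ref{lem.technical} needs to be recast so that the exhibiting rotation in $O(N)$ is genuinely real. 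A final subtlety is the interaction of the two genericity assumptions (on the basis $\mathcal V$ and on the sparse vector $f$): the order of quantifiers matters, and one needs to verify, in analogy with the projection argument at the end of Section~\ref{sec:proof_main_thm}, that the set of ``good'' bases is independent of the choice of support $S$ after taking the finite intersection over all $\binom{N}{K}$ subsets.
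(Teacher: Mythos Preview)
The paper does not actually prove this theorem: it is quoted from \cite{edidin2023generic} and merely remarked to have been obtained ``using the techniques of this paper.'' So there is no proof here to compare against, and your proposal cannot be matched line-by-line with anything in the present text.

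That said, your overall diagnosis is on target. You correctly identify why Theorem~\ref{thm.mainMRA} is vacuous in this setting (the linear span $\L_f$ equals all of $\R^N$ because each two-dimensional irreducible appears with multiplicity one), and you correctly propose to replace the span by the ambiguity orbit itself, whose real dimension is $\lfloor N/2\rfloor$, so that the transversal dimension count yields exactly $K\le N/2$. You also flag the right obstacle: a dimension/transversality argument only gives that the intersection $T_f\cap \L_S(\V)$ is zero-dimensional for generic $\V$, not that it consists solely of $\{\pm f\}$. Bridging that gap is the substantive content of the cited result, and your proposal does not supply it; the suggestion of a ``Bezout-type degree bound'' is vague, and over $\R$ such bounds control the number of real points only from above, not from below or exactly. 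In short, your sketch recovers the correct heuristic and the correct bound, but the step from ``expected dimension zero'' to ``unique up to sign'' is precisely where the argument in \cite{edidin2023generic} does its work, and that step is not filled in here.
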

}        
%
	

	\section{Discussion and future work} \label{sec:discussion}
	In this paper, we have derived general sparsity conditions under which the sample complexity of the MRA model~\eqref{eq:mra} is only $n=\omega(\sigma^4)$ rather than 
	$n=\omega(\sigma^6)$ in the general case. 
	We have further applied the result to cryo-EM, showing that if a molecular structure can be represented with $\sim N/3$ coefficients in a generic basis, then the sample complexity is quadratic in the variance of the noise.
	Next, we delineate a few possible extensions of these results.
	
	\paragraph{\rev{Linear transformations which are not compact groups}}
	
	Our MRA model~\eqref{eq:mra} assumes a compact group. 
	However, in some important situations, the group is non-compact, for instance, the group of rigid motions SE(d)~\cite{bendory2022compactification}.
	One challenge of working with non-compact groups is that their representations do not necessarily decompose into a sum of irreducibles which makes the representation-theoretic analysis of the second moment more difficult.
	\rev{The problem is even more challenging when considering a combination of a group action with a general linear operator; this is for example the situation when considering sub-pixel measurements~\cite{bendory2022super}.}
	
	\paragraph{Sample complexity for specific bases}
	Our main theoretical result, Theorem~\ref{thm.mainMRA}, holds for almost all bases but it is very difficult to say if they hold for a specific basis, such as wavelets, since the algebraic conditions \rev{on the bases} are implicit.
	An important future work is to derive conditions for recovery from the second moment for specific bases,  and ideally for all bases. 
	(In~\cite{bendory2022sparse}, recovery from the second moment  of structures composed of ideal point masses  was proven.)
	
	\paragraph{Unified theoretical framework with phase retrieval}
	In Section~\ref{sec:phase_retrieval}, Theorem~\ref{thm.phaseretrieval}, we \rev{discussed} sparsity conditions for recovering a signal from its power spectrum, which is the second moment of the simplest MRA model, where a signal in $\R^N$ is acted upon by $\Z_N.$
	This problem is called the phase retrieval problem.
	We wish to consolidate the proof techniques of Theorem~\ref{thm.mainMRA} and those used to prove Theorem~\ref{thm.phaseretrieval} \rev{in} one general theoretical framework, \rev{which should yield optimal dimension bounds for
recovering a signal from its second moment.} 

	\paragraph{Multi-target detection}
	The multi-target detection model was devised 
	to design a new computational \rev{paradigm} for recovering small molecular structure using cryo-EM~\cite{bendory2019multi}.
	Without delving into the technical details,
	the second moment of this model is provided by the 
	diagonals of the matrices $B_\ell, \, \ell=0,\ldots,L,$ that describe the second moment of the cryo-EM model~\eqref{eq:Bl}~\cite{bendory2018toward}. 
	Deriving the conditions for signal recovery from these diagonals will have important implications to the sample complexity of the multi-target detection model and to understanding the fundamental limitations of the   cryo-EM technology.

	\paragraph{Alternative priors} This work shows that the sample complexity of MRA and cryo-EM can be significantly improved if the signal can be sparsely represented. 
	An interesting future research thread is studying alternative priors that can improve the sample complexity, such as  statistical priors, data-driven priors (e.g., based on AlphaFold~\cite{jumper2021highly}), \rev{semi-algebraic priors~\cite{dym2022low}}, or priors based on the statistical properties of proteins~\cite{wilson1949probability,singer2021wilson}. 

	
\section*{Acknowledgments}
We thank Nicolas Boumal for his notes on~\cite{kam1980reconstruction}, and Guy Sharon and Oscar Mickelin  for helping, respectively, with Figure~\ref{fig:cryo} and Figure~\ref{fig:wavelet}.
This research is supported by the BSF grant no. 2020159. T.B. is also supported in part by the NSF-BSF grant no. 2019752, and the ISF grant no. 1924/21 and D.E. was also supported by NSF-DMS 1906725 and NSF-DMS 2205626.

\appendix

\section{Representation theory}
\label{sec:representation_theory}

\subsection{Terminology on representations}
Let $G$ be a group. A  (complex) representation of $G$ is a homomorphism, 
$G \stackrel{\pi} \to \GL(V)$, where $V$ is a complex vector space and $\GL(V)$ is the group
of invertible linear transformations $V \to V$. Given a representation
of a group $G$, we can define an action of $G$ on $V$ by $g \cdot v = \pi(g)v$.
Since $\pi(g)$ is a linear transformation, the action of $G$ is necessarily linear, meaning that for any vectors $v_1, v_2$ and scalars $\lambda, \mu \in \C$
$g \cdot ( \lambda v_1 + \mu v_2) = \lambda(g \cdot v_1) + \mu (g \cdot v_2)$.
Conversely, given a linear action of $G$ on a vector space $V$, we obtain
a homomorphism $G \to \GL(V)$, $g \mapsto T_g$, where
$T_g \colon V \to V$ is the linear transformation $T_g(v) = (g \cdot v)$.
Thus, giving a representation of $G$ is equivalent to giving a linear action
of  $G$ on a vector space $V$. Given this equivalence, we will follow
standard terminology and refer to a vector space $V$ with a linear action
of $G$ as a {\em representation of $G$}.

A representation $V$ of $G$ is {\em finite dimensional} if $\dim V < \infty$.
In this case, a choice of basis for $V$ identifies $\GL(V) = \GL(N)$,
where $N = \dim V$. Given a Hermitian inner product $\langle\cdot,\cdot \rangle$
on $V$,  we say that a representation is {\em unitary} if for any two vectors
$v_1, v_2 \in V$
$\langle g \cdot v_1,   g \cdot v_2 \rangle = \langle v_1,   v_2 \rangle$.
If we choose an orthonormal basis for $V$, then the representation
of $G$ is unitary if and only if the image of $G$ under the homomorphism
$G \to GL(N)$ lies in the subgroup $U(N)$ of unitary matrices.

A representation $V$ of a group $G$ is {\em irreducible} if $V$ contains
no non-zero proper $G$-invariant subspaces.

\subsection{Representations of compact groups}
Any compact group $G$ has a $G$-invariant measure called a Haar measure.
The Haar measure $dg$ is typically normalized so that $\int_G dg =1$.
If $V$ is a finite-dimensional representation of a compact group
and $(\cdot,\cdot)$ is any Hermitian inner product, then
the inner product $\langle \cdot,\cdot\rangle$ defined by the formula
$\langle v_1, v_2 \rangle = \int_G (g\cdot v_1, g \cdot v_2)\;dg$ is $G$-invariant. As a consequence we obtain the following fact.
\begin{proposition}
	Every finite dimensional representation of a compact group is unitary.
\end{proposition}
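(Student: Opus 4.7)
The plan is to carry out the averaging construction (Weyl's unitarian trick) that has already been hinted at in the paragraph immediately preceding the statement. Let $V$ be a finite-dimensional representation of the compact group $G$, and let $(\cdot,\cdot)$ be \emph{any} Hermitian inner product on $V$ (which exists since $V$ is finite-dimensional). Define
\begin{equation*}
\langle v_1, v_2 \rangle := \int_G (g\cdot v_1,\, g\cdot v_2)\, dg,
\end{equation*}
where $dg$ is the normalized Haar measure on $G$. The strategy is to verify that $\langle\cdot,\cdot\rangle$ is a $G$-invariant Hermitian inner product, and then to conclude by choosing an orthonormal basis for $\langle\cdot,\cdot\rangle$.

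First I would check that $\langle\cdot,\cdot\rangle$ is a Hermitian inner product. Sesquilinearity and conjugate symmetry follow immediately from the corresponding properties of $(\cdot,\cdot)$ and the linearity of the integral. For positive definiteness, observe that for any fixed $v\in V$ the function $g\mapsto (g\cdot v, g\cdot v)$ is continuous on $G$ (since the action is continuous and $(\cdot,\cdot)$ is continuous) and non-negative. If $v\neq 0$, then $(v,v)>0$, and by continuity this remains strictly positive on an open neighborhood of the identity, which has positive Haar measure; hence $\langle v,v\rangle>0$.

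Next I would verify $G$-invariance. For any $h\in G$,
\begin{equation*}
\langle h\cdot v_1, h\cdot v_2\rangle = \int_G (g\cdot (h\cdot v_1), g\cdot (h\cdot v_2))\, dg = \int_G ((gh)\cdot v_1, (gh)\cdot v_2)\, dg = \langle v_1, v_2\rangle,
\end{equation*}
where the last equality uses right-invariance of Haar measure on the compact group $G$ (performing the substitution $g\mapsto gh^{-1}$).

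Finally, choose an orthonormal basis for $V$ with respect to $\langle\cdot,\cdot\rangle$; this identifies $V$ with $\C^N$ equipped with its standard Hermitian inner product, and the $G$-invariance established above says precisely that each $g\in G$ acts by a unitary transformation, i.e.\ the representation homomorphism $G\to\GL(V)$ factors through $U(N)$. The main (and only) subtlety is the positive-definiteness argument, which relies on continuity of the action together with compactness so that the integrand cannot be annihilated by the Haar measure; everything else is a routine manipulation of integrals.
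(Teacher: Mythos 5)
Your proof is correct and follows the same route as the paper: averaging an arbitrary Hermitian inner product over the group with respect to normalized Haar measure (Weyl's unitarian trick) to produce a $G$-invariant inner product, with respect to which the action is unitary. The paper states this averaging construction immediately before the proposition and treats the conclusion as an immediate consequence; your write-up simply supplies the routine verifications (positive definiteness and invariance via right-invariance of Haar measure) in full.
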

Using the invariant inner product we can then obtain the following decomposition
theorem for finite dimensional representations of compact group.
\begin{proposition}
	Any finite dimensional representation of a compact group
	decomposes into a direct sum of irreducible representations.
\end{proposition}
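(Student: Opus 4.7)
The plan is to prove this by induction on $\dim V$, using the preceding proposition that guarantees a $G$-invariant Hermitian inner product $\langle \cdot,\cdot\rangle$ on $V$ obtained by averaging over the Haar measure.

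First I would handle the base case. If $\dim V = 0$, then $V$ is the empty direct sum of irreducibles, and if $V$ is already irreducible, then $V$ is itself the one-term direct sum. So assume $V$ is a reducible representation of dimension $N \geq 1$, meaning there exists a non-zero proper $G$-invariant subspace $W \subset V$.

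The key step is to show that $W^\perp = \{v \in V : \langle v, w\rangle = 0 \text{ for all } w \in W\}$ is also $G$-invariant. Given $v \in W^\perp$ and $g \in G$, I need to check that $g \cdot v \in W^\perp$, that is $\langle g \cdot v, w \rangle = 0$ for all $w \in W$. Using $G$-invariance of the inner product and invariance of $W$ (so that $g^{-1} \cdot w \in W$), we get
\begin{equation*}
\langle g \cdot v, w \rangle = \langle v, g^{-1} \cdot w \rangle = 0,
\end{equation*}
since $v \in W^\perp$ and $g^{-1} \cdot w \in W$. Hence $W^\perp$ is a $G$-invariant subspace. Since $\langle \cdot , \cdot \rangle$ is a positive-definite Hermitian form, we have the orthogonal direct sum decomposition $V = W \oplus W^\perp$ of $G$-representations, where both summands have strictly smaller dimension than $V$ (because $W$ is a proper non-zero subspace).

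By the inductive hypothesis applied to $W$ and $W^\perp$, each decomposes as a direct sum of irreducible representations, and concatenating these decompositions yields the desired decomposition of $V$. The main (and essentially only) subtlety is the verification that $W^\perp$ is $G$-invariant, which is where unitarity of the representation is used; without the preceding proposition this step would fail, and indeed the analogous statement is false for non-compact groups or for representations on vector spaces over fields in which the order of $G$ is not invertible. Everything else is routine linear algebra and finite induction on dimension.
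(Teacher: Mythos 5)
Your proof is correct and is exactly the standard argument the paper has in mind: the paper states this proposition without proof, immediately after noting that "using the invariant inner product we can then obtain the following decomposition theorem," and your induction via the $G$-invariance of $W^\perp$ is precisely how that invariant inner product is used. No gaps.
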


\rev{
	\begin{example}
		Most representations that naturally occur in imaging
		and signal processing are not irreducible. For example, consider the action
		of the cyclic group $\Z_N$ on $\C^N$ by cyclic shifts; i.e., if
		$T$ is the generator of $\Z_N$, then $T \cdot (x_0, \ldots , x_{N-1}) =
		(x_1, x_2, \ldots, x_{N-1}, x_0)$. 
		To see that this representation is reducible, 
		let $e_0, \ldots , e_{N-1}$ be the standard basis and take $\omega = e^{2 \pi \iota/N}$. 
		With this notation, if  $0 \leq n \leq N-1$, the one-dimensional subspace $V_n$ spanned by
		the vector $f_n = e_0 + \omega^{n}e_1 + \omega^{2n}e_2 + \ldots + \omega^{(n-1)n}e_{n-1}$ is invariant under $T$ because $T \cdot f_n = \omega^n f_n$.
		The vectors $f_0, \ldots , f_{N-1}$ are the Fourier basis for $\C^N$ and
		$\C^N$ decomposes as the sum of one-dimensional representations $V_0 \oplus \ldots \oplus V_{N-1}$. In general, if $G$ is an abelian compact group
		then any complex representation of $G$ will decompose into a sum of one-dimensional representations.

		For non-abelian groups, or even real representations
		of abelian groups, this need not be the case. If we consider the action of $\Z_N$ on $\R^N$ by cyclic shifts,  
		then $\R^N$ decomposes into a sum of one and two-dimensional irreducible
		representations.
		For example if $N=4$, then $\R^4$ decomposes as the sum
		$V_0 \oplus  V_1 \oplus  V_2$, where $V_0 = \Span (e_0 + e_1 + e_2 +e_3)$,
		$V_1 = \Span(e_0 - e_2, e_1 -e_3)$ and $V_2 = \Span (e_0 -e_1 + e_2 - e_3)$.
	\end{example}
}

If $V$ is a representation, then $V^G = \{v \in V| g \cdot v = v\}$
is a subspace which is called the subspace of invariants.

\subsection{Schur's Lemma}
A key property of irreducible unitary representations is Schur's Lemma.
Recall that  a linear transformation $\Phi $ is $G$-invariant if $g \cdot \Phi v = \Phi g \cdot v$.

\begin{lemma} \label{lem.schur}
	Let $\Phi \colon V_1 \to  V_2$ be a $G$-invariant linear transformation
	of finite dimensional irreducible representations
	of a group $G$ (not necessarily compact).
	Then, $\Phi$ is either zero or an isomorphism.
	Moreover, if $V$ is a finite dimensional irreducible unitary representation
	of a group $G$ then any $G$-invariant linear transformation
	$\phi \colon V \to V$ is multiplication by a scalar.
\end{lemma}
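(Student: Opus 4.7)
The plan is to exploit the irreducibility of the representations through the observation that kernels, images, and eigenspaces of $G$-invariant maps are themselves $G$-invariant subspaces.

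For the first assertion, I would start by noting that if $\Phi \colon V_1 \to V_2$ is $G$-invariant, then $\ker \Phi \subset V_1$ is a $G$-invariant subspace: if $v \in \ker \Phi$, then $\Phi(g \cdot v) = g \cdot \Phi(v) = 0$. Similarly, $\mathrm{im}\,\Phi \subset V_2$ is a $G$-invariant subspace, since $g \cdot \Phi(v) = \Phi(g \cdot v) \in \mathrm{im}\,\Phi$. Now I invoke irreducibility of $V_1$ to conclude that $\ker \Phi$ is either $\{0\}$ or $V_1$, and irreducibility of $V_2$ to conclude that $\mathrm{im}\,\Phi$ is either $\{0\}$ or $V_2$. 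A simple case analysis then yields that either $\Phi = 0$ (when $\ker \Phi = V_1$, equivalently when $\mathrm{im}\,\Phi = \{0\}$), or $\Phi$ is both injective and surjective, hence an isomorphism.

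For the second assertion, I would reduce it to the first by an eigenvalue argument. Let $\phi \colon V \to V$ be a $G$-invariant linear transformation of a finite-dimensional irreducible (complex) representation. Since $V$ is a finite-dimensional complex vector space, $\phi$ has at least one eigenvalue $\lambda \in \mathbb{C}$. Consider the map $\phi - \lambda I \colon V \to V$, which is also $G$-invariant (because $I$ is $G$-invariant and $G$-invariance is preserved under addition and scalar multiplication). Its kernel contains the $\lambda$-eigenspace of $\phi$, which is non-zero by definition of eigenvalue. By the first part applied to $\Phi = \phi - \lambda I$ with $V_1 = V_2 = V$, the map $\phi - \lambda I$ is either zero or an isomorphism; since it has non-zero kernel, it must be zero, giving $\phi = \lambda I$.

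There is no serious obstacle here; the only subtlety worth flagging is that the eigenvalue argument in the second part genuinely requires working over an algebraically closed field, which is why the statement is phrased for complex representations (the unitary hypothesis in the lemma is not strictly needed for this particular argument, but it is the context in which the result is applied in the paper). The proof uses nothing beyond elementary linear algebra together with the definition of irreducibility.
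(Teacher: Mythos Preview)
Your proof is correct and is precisely the standard textbook argument for Schur's Lemma. Note, however, that the paper does not actually supply a proof of this lemma: it is stated in the supplementary material as a classical fact from representation theory and then invoked throughout the paper without further justification. So there is nothing to compare against; your argument would serve perfectly well as the omitted proof.
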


\subsection{ Dual, $\Hom$ and tensor products of representations} \label{sec.adjoint}
If $V_1$ and $V_2$ are representations of a group $G$, 
then the vector space $\Hom(V_1, V_2)$ of linear transformations
$V_1 \to V_2$ has a natural linear action of $G$ given by
the formula $(g\cdot A)(v_1) = g \cdot A(g^{-1}v_1)$.
In particular, if $V$ is a representation of $G$, then $V^* = \Hom(V, \C)$
has a natural action of $G$ given by the formula $(g \cdot f)(v) = f(g^{-1}v)$.

A choice of inner product on $V$ determines an identification of vector spaces
$V =V^*$, given by the formula $v \mapsto \langle \cdot , v \rangle$.
If $V$ is a
unitary representation of $G$ then with the identification
of $V=V^*$ the dual action of $G$ on $V$ is given by the formula
$g \cdot_* v = \overline{g} \cdot v$.
Likewise, if $V_1$ and $V_2$ are two representations
then we can define an action of $G$ on $V_1 \otimes V_2$ by the formula
$g\cdot (v_1 \otimes v_2) = (g \cdot v_1) \otimes (g \cdot v_2)$.

Given two representations spaces $V_1, V_2$ there is an isomorphism of representations
$V_1 \otimes V_2^* \to \Hom(V_2, V_1)$ given by 
the formula $v_1 \otimes f_2 \mapsto \phi$, where the linear transform
$\phi \colon V_2 \to V_1$
is defined by the formula
$\phi(v_2) = f_2(v_2) v_1$.
In particular, we can identify $V \otimes V^*$ with $\Hom(V,V)$.

\rev{
	\section{Grassmannians}
	The set $\Gr(M,V)$ of $M$-dimensional linear subspaces of an $N$-dimensional
	vector space $V$ has the natural structure as a projective manifold, called
	the {\em Grassmannian of $M$ planes in $V$}. The Grassmannian $\Gr(M,V)$
	has dimension $M(N-M)$ and there are a number of ways to see the manifold
	structure and compute the dimension.
	
	Given an ordered basis $v_1, \ldots , v_M$
	for an $M$-dimensional linear subspace
	$\Lambda$, we can associate a full rank $M \times N$ matrix
	$A_\Lambda = [v_1 \ldots v_M]$. Conversely, given a full rank $M \times N$
	matrix~$A$, the columns of $A$ determine an ordered basis for an $M$-dimensional linear
	subspace of $V$. Two matrices $A,A'$ correspond to the same linear subspace
	if an only there is an invertible $M \times M$ matrix $P$ such
	that $A = PA'$. Hence, the Grassmannian can be identified as
	the quotient $F(M,N)/\GL(M)$, where $F(M,N)$ is the set of full rank
	$M \times N$ matrices. Since $F(M,N)$ has dimension $MN$ and $\GL(M)$
	has dimension $M^2$, the quotient
	has dimension $MN - M^2 = M(N-M)$.
	
	To see that $\Gr(M,V)$ is a projective variety, we note that
	if $v_1, \ldots , v_M$ and $v'_1, \ldots , v'_M$ are two bases
	for an $M$-dimensional linear subspace $\Lambda$ then,  $v_1 \wedge \ldots \wedge v_M  = \lambda (v'_1 \wedge \ldots
	\wedge v'_M)$  for some scalar $\lambda$. Here $\wedge$ denotes the exterior
	product. Thus there is a well-defined
	map $\Gr(M,V) \to \mathbb{P}(\bigwedge^M V)$ which sends
	the point representing the linear subspace $\Lambda$ to the
	exterior product $v_1 \wedge \ldots \wedge v_M$ where $v_1, \ldots , v_M$
	is any basis for $\Lambda$. Moreover, this map is an embedding
	because $v_1 \wedge \ldots \wedge v_M = \lambda(v'_1 \wedge \ldots \wedge
	v'_M)$ for some scalar $\lambda$ if and only if $v_1, \ldots , v_M$
	and $v'_1, \ldots , v'_M$ span the same $M$-dimensional linear subspace.
	This is embedding is called the Pl\"ucker embedding.
	
	For more on Grassmannians, see \cite[Chapter 1, Section 5]{griffiths1978principles} or \cite[Lecture 6]{harris1995algebraic}.
}

\bibliographystyle{siamplain}
\bibliography{ref}

\end{document}